\newtheorem{proposition}{Proposition}
\newtheorem{lemma}[proposition]{Lemma}
\newtheorem{theorem}[proposition]{Theorem}
\newenvironment{proof}{\noindent \textbf{{Proof~} }}{\hfill $\blacksquare$}
\definecolor{colorone}{rgb}{1,0.36,0.03}
\definecolor{colortwo}{rgb}{0.54,0.71,0.03}
\definecolor{colorthree}{rgb}{0.01,0.51,0.93}
\definecolor{colorfour}{rgb}{0.47,0.26,0.58}
\newmdenv[skipabove=7pt,
skipbelow=7pt,
backgroundcolor=darkblue!15,
innerleftmargin=5pt,
innerrightmargin=5pt,
innertopmargin=5pt,
leftmargin=0cm,
rightmargin=0cm,
innerbottommargin=5pt,
linewidth=1pt]{tBox}
\newmdenv[skipabove=7pt,
skipbelow=7pt,
backgroundcolor=darkred!15,
innerleftmargin=5pt,
innerrightmargin=5pt,
innertopmargin=5pt,
leftmargin=0cm,
rightmargin=0cm,
innerbottommargin=5pt,
linewidth=1pt]{rBox}
\definecolor{darkblue}{RGB}{0,76,156}
\definecolor{darkkblue}{RGB}{0,0,153}
\definecolor{blue2}{RGB}{102,178,255}
\definecolor{darkred}{RGB}{195,0,0}
\newmdenv[skipabove=7pt,
skipbelow=7pt,
backgroundcolor=darkkblue!15,
innerleftmargin=5pt,
innerrightmargin=5pt,
innertopmargin=5pt,
leftmargin=0cm,
rightmargin=0cm,
innerbottommargin=5pt,
linewidth=1pt]{dBox}
\newcommand{\nc}{\newcommand}
\nc{\rnc}{\renewcommand}
\nc{\lbar}[1]{\overline{#1}}
\nc{\avg}[1]{\langle#1\rangle}
\nc{\smfrac}[2]{\mbox{$\frac{#1}{#2}$}}
\nc{\tr}{\operatorname{Tr}}
\nc{\ox}{\otimes}
\nc{\dg}{\dagger}
\nc{\dn}{\downarrow}
\nc{\cA}{{\cal A}}
\nc{\cB}{{\cal B}}
\nc{\cC}{{\cal C}}
\nc{\cD}{{\cal D}}
\nc{\cE}{{\cal E}}
\nc{\cF}{{\cal F}}
\nc{\cG}{{\cal G}}
\nc{\cH}{{\cal H}}
\nc{\cI}{{\cal I}}
\nc{\cJ}{{\cal J}}
\nc{\cK}{{\cal K}}
\nc{\cL}{{\cal L}}
\nc{\cM}{{\cal M}}
\nc{\cN}{{\cal N}}
\nc{\cO}{{\cal O}}
\nc{\cP}{{\cal P}}
\nc{\cQ}{{\cal Q}}
\nc{\cR}{{\cal R}}
\nc{\cS}{{\cal S}}
\nc{\cT}{{\cal T}}
\nc{\cV}{{\cal V}}
\nc{\cX}{{\cal X}}
\nc{\cY}{{\cal Y}}
\nc{\cZ}{{\cal Z}}
\nc{\cW}{{\cal W}}
\nc{\csupp}{{\operatorname{csupp}}}
\nc{\qsupp}{{\operatorname{qsupp}}}
\nc{\var}{{\operatorname{var}}}
\nc{\rar}{\rightarrow}
\nc{\lrar}{\longrightarrow}
\nc{\polylog}{{\operatorname{polylog}}}
\nc{\wt}{{\operatorname{wt}}}
\nc{\supp}{{\operatorname{supp}}}
\nc{\argmin}{{\operatorname{argmin}}}
\nc{\RR}{{{\mathbb R}}}
\nc{\CC}{{{\mathbb C}}}
\nc{\FF}{{{\mathbb F}}}
\nc{\NN}{{{\mathbb N}}}
\nc{\ZZ}{{{\mathbb Z}}}
\nc{\PP}{{{\mathbb P}}}
\nc{\QQ}{{{\mathbb Q}}}
\nc{\UU}{{{\mathbb U}}}
\nc{\EE}{{{\mathbb E}}}
\nc{\id}{{\operatorname{id}}}
\nc{\CHSH}{{\operatorname{CHSH}}}
\definecolor{beamer}{rgb}{0.2,0.2,0.7}
\definecolor{colorone}{rgb}{1,0.36,0.03}
\definecolor{colortwo}{rgb}{0.4,0.77,0.17}
\definecolor{colorthree}{rgb}{0.01,0.51,0.93}
\definecolor{colorfour}{rgb}{0.47,0.26,0.58}
\definecolor{colorfive}{rgb}{0.12,0.55,0.16}
\nc{\XZX}{\textit{XZX}}
\nc{\YZY}{\textit{YZY}}
\nc{\ZXZ}{\textit{ZXZ}}
\nc{\WZW}{\textit{WZW}}
\nc{\UZU}{\textit{UZU}}
\title{Power and limitations of single-qubit\\ native quantum neural networks}
\author{%
  Zhan Yu
  \And
  Hongshun Yao
  \And
  Mujin Li
  \And
  Xin Wang\thanks{wangxin73@baidu.com}\\
  Institute for Quantum Computing, Baidu Research, Beijing 100193, China\thanks{Z. Y., H. Y., M. L. contributed equally to this work.}\\
}
\begin{document}

\maketitle

\begin{abstract}
Quantum neural networks (QNNs) have emerged as a leading strategy to establish applications in machine learning, chemistry, and optimization. While the applications of QNN have been widely investigated, its theoretical foundation remains less understood. In this paper, we formulate a theoretical framework for the expressive ability of data re-uploading quantum neural networks that consist of interleaved encoding circuit blocks and trainable circuit blocks. First, we prove that single-qubit quantum neural networks can approximate any univariate function by mapping the model to a partial Fourier series. We in particular establish the exact correlations between the parameters of the trainable gates and the Fourier coefficients, resolving an open problem on the universal approximation property of QNN. Second, we discuss the limitations of single-qubit native QNNs on approximating multivariate functions by analyzing the frequency spectrum and the flexibility of Fourier coefficients. We further demonstrate the expressivity and limitations of single-qubit native QNNs via numerical experiments. We believe these results would improve our understanding of QNNs and provide a helpful guideline for designing powerful QNNs for machine learning tasks.
\end{abstract}


\section{Introduction}
Quantum computing is a technology that exploits the laws of quantum mechanics to solve complicated problems much faster than classical computers. It has been applied in areas such as breaking cryptographic systems~\cite{Shor1997}, searching databases~\cite{Grover1996}, and quantum simulation~\cite{Lloyd1996, Childs2018a}, in which it gives a quantum speedup over the best known classical algorithms. With the fast development of quantum hardware, recent results~\cite{arute2019quantum, Wu2021a, Zhong2020} have shown quantum advantages in specific tasks.
An emerging direction is to investigate if quantum computing can offer quantum advantages in artificial intelligence, giving rise to an interdisciplinary area called \emph{quantum machine learning}~\cite{Biamonte2017b}.

A leading strategy to quantum machine learning uses \emph{quantum neural networks} (QNNs), which are quantum analogs of artificial neural networks (NNs). Much progress has been made in applications of QNN in various topics~\cite{Cerezo2020, Bharti2021, Endo2020}, including quantum autoencoder~\cite{Romero2017, Cao2020}, supervised learning~\cite{Cong2018, Li2021-classifier, Schuld2021, Li2021}, dynamic learning~\cite{Khatri2018, Yu2022, Caro2022}, quantum chemistry~\cite{McArdle2018a}, and quantum metrology~\cite{Koczor2020, Beckey2022, Meyer2021}.
Similar to the field of machine learning, a crucial challenge of quantum machine learning is to design powerful and efficient QNN models for quantum learning tasks, which requires a theoretical understanding of how structural properties of QNN may affect its expressive power.

The expressive power of a QNN model can be characterized by the function classes that it can approximate. Recently, the universal approximation property (UAP) of QNN models has been investigated, which is similar to the universal approximation theorem~\cite{cybenko1989approximation, hornik1991approximation} in machine learning theory. The authors of \cite{schuld2021effect} suggested that a QNN model can be written as a partial Fourier series in the data and proved the existence of a multi-qubit QNN model that can realize a universal function approximator. The UAP of single-qubit models remains an open conjecture, due to the difficulties in analyzing the flexibility of Fourier coefficients. Another work~\cite{goto2021universal} considered hybrid classical-quantum neural networks and obtained the UAP by using the Stone-Weierstrass theorem. Ref.~\cite{perez2021one} proved that even a single-qubit hybrid QNN can approximate any bounded function.

The above results of UAP show that the expressivity of QNNs is strong, but it does not reveal the relationship between the structural properties of a QNN and its expressive ability. Therefore the UAP may not be a good guide for constructing QNN models with practical interests. In particular, it is worth noting that the existence proof in Ref.~\cite{schuld2021effect} is under the assumption of multi-qubit systems, exponential circuit depth, and arbitrary observables, which does not explicitly give the structure of QNNs. Meanwhile, Refs.~\cite{goto2021universal, perez2021one} demonstrated the construction of QNNs in detail, but it is unclear whether the powerful expressivity comes from the classical part or the quantum part of hybrid models. Moreover, a systematic analysis of how parameters in the QNN affect the classes of functions that it can approximate is missing. The absence of these theoretical foundations hinders the understanding on the expressive power and limitation of QNNs, which makes it highly necessary but challenging to design effective and efficient QNNs.

To theoretically investigate the expressivity of QNNs, it is important to study the simplest case of single-qubit QNNs, just like the celebrated universal approximation theorem first showing the expressivity of depth-2 NNs~\cite{cybenko1989approximation, hornik1991approximation}. In this paper, we formulate an analytical framework that correlates the structural properties of a single-qubit native QNN and its expressive power. We consider data re-uploading models that consist of interleaved data encoding circuit blocks and trainable circuit blocks~\cite{perez-salinas2020Data}. First, we prove that there exists a single-qubit native QNN that can express any Fourier series, which is a universal approximator for any square-integrable univariate function. It solves the open problem on the UAP of single-qubit QNNs in Ref.~\cite{schuld2021effect}. Second, we systematically analyze how parameters in trainable circuit blocks affect the Fourier coefficients. The main results on the expressivity of QNNs are summarized as in Fig.~\ref{fig:qnn_main}. Third, we discuss potential difficulties for single-qubit native QNNs to approximate multivariate functions. Additionally, we compare native QNNs with the hybrid version and show the fundamental difference in their expressive power. We also demonstrate the expressivity and limitations of single-qubit native QNNs via numerical experiments on approximating univariate and multivariate functions. Our analysis, beyond the UAP of QNNs, improves the understanding of the relationship between the expressive power and the structure of QNNs. This fundamental framework provides a theoretical foundation for data re-uploading QNN models, which is helpful to construct effective and efficient QNNs for quantum machine learning tasks.


\begin{figure}[H]
    \centering
    \includegraphics[width=0.95\textwidth]{./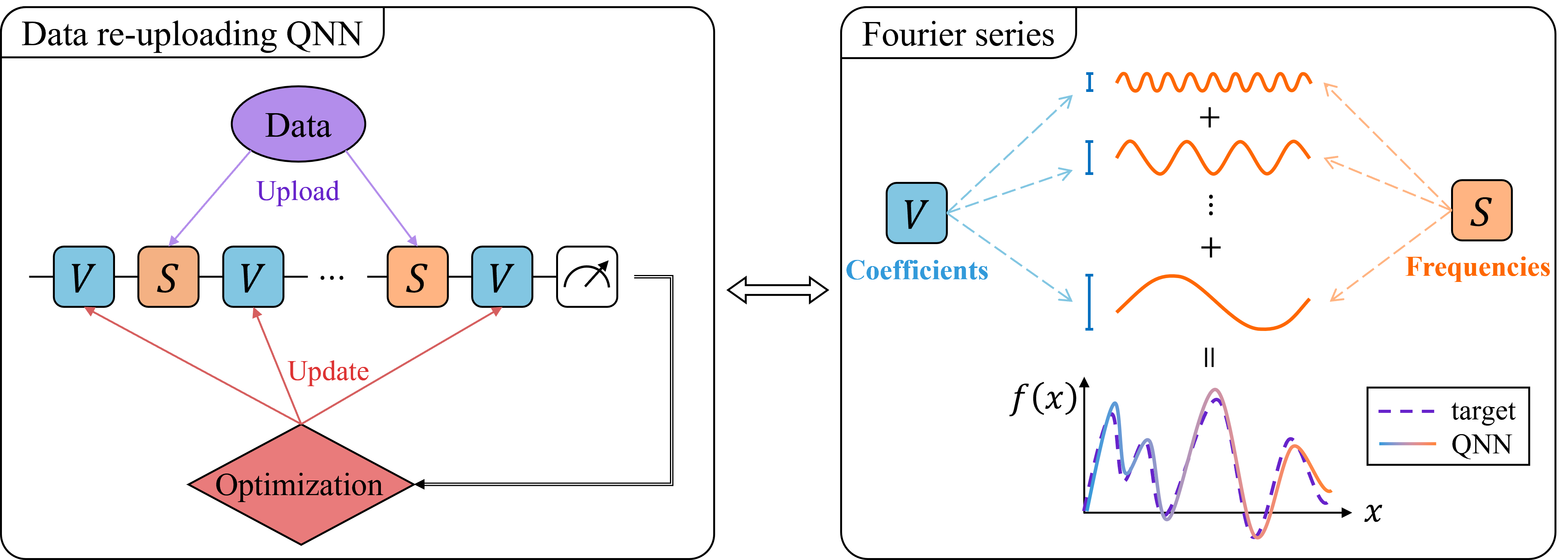}
    \caption{\textbf{A schematic diagram of the main results.} A single-qubit data re-uploading QNN model consisting of interleaved trainable blocks (in blue) and encoding blocks (in orange) corresponds to a partial Fourier series. The encoding blocks decide the frequency spectrum of the Fourier series and the trainable blocks control the Fourier coefficients. By the summability and convergence of the Fourier series, the QNN model can approximate any square-integrable target function.}
    \label{fig:qnn_main}
\end{figure}

We will start by giving some background and defining the native QNN models in the next section, and then analyze the expressivity of single-qubit native QNNs in Section~\ref{sec:expressivity}. In Section~\ref{sec:limitation}, we discuss the limitation of single-qubit native QNNs and compare native QNNs with hybrid QNNs, which shows the fundamental difference between their expressive power. The numerical experiments on the expressivity and limitations of single-qubit native QNNs are described in Section~\ref{sec:experiments}.


\section{Preliminaries}\label{sec:preliminaries}

\subsection{A primer on quantum computing}
\paragraph{Quantum state} The basic unit of information in quantum computation is one \emph{quantum bit}, or \emph{qubit} for short. Just like a classical bit has a state in either $0$ or $1$, a qubit also has a state. A single-qubit state is a unit vector in a 2-dimensional Hilbert space $\CC^2$, which is commonly denoted in Dirac notation $\ket{\psi} = \alpha\ket{0} + \beta\ket{1}$, where $\ket{0} = (1, 0)^T$ and $\ket{1} = (0,1)^T$ are known as computational basis states. Here $\ket{\psi}$ denotes a column vector and its conjugate transpose $\bra{\psi} \coloneqq \ket{\psi}^\dagger$ is a row vector. Then the inner product $\braket{\psi}{\psi} = \norm{\psi}^2$ denotes the square of $L^2$-norm of $\ket{\psi}$. Note that $\ket{\psi}$ is a normalized state so $\braket{\psi}{\psi} = \abs{\alpha}^2 + \abs{\beta}^2 = 1$. Having this constraint, a single-qubit state can be represented as a point at surface of a \emph{Bloch sphere}, written as $\ket{\psi} = \cos(\theta/2) \ket{0} + e^{i\phi} \sin(\theta/2) \ket{1}$, where $\theta$ and $\phi$ are re-interpreted as azimuthal angle and polar angle in spherical coordinates. More generally, a quantum state of $n$ qubits can be represented as a normalized vector in the $n$-fold tensor product Hilbert space $\CC^{2^n}$.

\paragraph{Quantum gate} Quantum gates are basic operations used to manipulate qubits. Unlike some classical logical gates, quantum gates are reversible, so they can be represented as \emph{unitary transformations} in the Hilbert space. A unitary matrix $U$ satisfies $U^\dagger U = UU^\dagger = I$. A commonly used group of single-qubit quantum gates is the \emph{Pauli gates}, which can be written as Pauli matrices:
\begin{equation}
    X = \begin{bmatrix}
        0 & 1\\
        1 & 0
    \end{bmatrix},
    \qquad
    Y = \begin{bmatrix}
        0 & -i\\
        i & 0
    \end{bmatrix},
    \qquad
    Z = \begin{bmatrix}
        1 & 0\\
        0 & -1
    \end{bmatrix}.
\end{equation}
The Pauli $X$, $Y$, and $Z$ gates are equivalent to a rotation around the $x$, $y$, and $z$ axes of the Bloch sphere by $\pi$ radians, respectively. A group of more general gates is the \emph{rotation operator gates} $\{R_P(\theta) = e^{-i\frac{\theta}{2} P} \mid P \in \{X, Y, Z\}\}$, which allows the rotating angle around the $x$, $y$ and $z$ axes of the Bloch sphere to be customized. They can be written in the matrix form as
\begin{equation}
    R_X(\theta) = \begin{bmatrix}
        \cos\frac{\theta}{2} & -i\sin\frac{\theta}{2}\\[1ex]
        -i\sin\frac{\theta}{2} & \cos\frac{\theta}{2}
    \end{bmatrix},\quad
    R_Y(\theta) = \begin{bmatrix}
        \cos\frac{\theta}{2} & -\sin\frac{\theta}{2}\\[1ex]
        \sin\frac{\theta}{2} & \cos\frac{\theta}{2}
    \end{bmatrix},\quad
    R_Z(\theta) = \begin{bmatrix}
        e^{-i \frac{\theta}{2}} & 0\\[1ex]
        0 & e^{i \frac{\theta}{2}}
    \end{bmatrix}.
\end{equation}


\paragraph{Quantum measurement} A measurement is a quantum operation to retrieve classical information from a quantum state. The simplest measurement is the computational basis measurement; for a single-qubit state $\ket{\psi} = \alpha \ket{0} + \beta\ket{1}$, the outcome of such a measurement is either $\ket{0}$ with probability $\abs{\alpha}^2$ or $\ket{1}$ with probability $\abs{\beta}^2$. Computational basis measurements can be generalized to \emph{Pauli measurements}, where Pauli matrices are \emph{observables} that we can measure. For example, measuring Pauli $Z$ is equivalent to the computational basis measurement, since $\ket{0}$ and $\ket{1}$ are eigenvectors of $Z$ with corresponding eigenvalues $\pm 1$. Pauli $Z$ measurement returns $+1$ if the resulting state is $\ket{0}$ and returns $-1$ if the resulting state is $\ket{1}$. We can calculate the expected value of Pauli $Z$ measurement when the state is $\ket{\psi}$:
\begin{equation}
    \bra{\psi} Z \ket{\psi} = (\alpha^* \bra{0} + \beta^* \bra{1}) Z (\alpha \ket{0} + \beta\ket{1}) = \abs{\alpha}^2 - \abs{\beta}^2.
\end{equation}
Pauli measurements can be extended to the case of multiple qubits by a tensor product of Pauli matrices.

\subsection{Data re-uploading quantum neural networks}
We consider the data re-uploading QNN model~\cite{perez-salinas2020Data}, which is a generalized framework of quantum machine learning models based on parameterized quantum circuits~\cite{jerbi2022quantum}. A data re-uploading QNN is a quantum circuit that consists of interleaved data encoding circuit blocks $S(\cdot)$ and trainable circuit blocks $V(\cdot)$,
\begin{align}\label{def:data reuploading QNN}
    U_{\bm\theta,L}(\bm{x}) = V(\bm{\theta_0}) \prod_{j = 1}^L S(\bm{x}) V(\bm{\theta_j}),
\end{align}
where $\bm x$ is the input data, $\bm \theta = (\bm{\theta_0}, \ldots, \bm{\theta_L})$ is a set of trainable parameters, and $L$ denotes the number of layers. It is common to build the data encoding blocks and trainable blocks using the most prevalent parameterized quantum operators $\{R_X, R_Y, R_Z\}$. We define the output of this model as the expectation value of measuring some observable $M$,
\begin{equation}\label{eq:output_of_qnn}
    f_{\bm\theta, L}(\bm x) = \bra{0} U^\dagger_{\bm\theta,L}(\bm{x}) M U_{\bm\theta,L}(\bm{x}) \ket{0}.
\end{equation}

Note that some data re-uploading QNNs introduce trainable weights in data pre-processing or post-processing, which are considered as hybrid QNNs. For example, the data encoding block defined as $S(\bm w \cdot \bm x)$ is essentially equivalent to feeding data $\bm x$ into a neuron with weight $\bm w$ and then uploading the output to an encoding block $S(\cdot)$. Such a mixing structure makes it hard to tell whether the expressive power comes from the classical or quantum part. To solely study the expressive power of QNNs, we define the concept of \emph{native QNN}, where all trainable weights are introduced by parameters of tunable quantum gates so that they can be distinguished from a hybrid QNN. Throughout this paper, we simply refer to the native QNN as QNN for short unless specified otherwise. 

\section{Expressivity of single-qubit QNNs}\label{sec:expressivity}
To better understand the expressive power of QNNs, we start investigating the simplest case of single-qubit models. Ref.~\cite{schuld2021effect} investigated the expressive power of QNNs using the Fourier series formalism. In this section, we establish an exact correlation between the single-qubit QNN and the Fourier series in terms of both the frequency spectrum and Fourier coefficients. Note that we consider one-dimensional input data for now, which corresponds to the class of univariate functions. 

A Fourier series is an expansion of a periodic function $f(x)$ in infinite terms of a sum of sines and cosines which can be written in the exponential form as
\begin{equation}\label{eq:Fourier_exp}
    f(x) = \sum_{n=-\infty}^\infty c_n e^{i\frac{2\pi}{T}nx},
\end{equation}
where
\begin{equation}
    c_n = \frac{1}{T} \int_{T} f(x)e^{i\frac{2\pi}{T}nx} d x
\end{equation}
are the Fourier coefficients. Here $T$ is the period of function $f(x)$. The quantities $n\frac{2\pi}{T}$ are called the \emph{frequencies}, which are multiples of the base frequency $\frac{2\pi}{T}$. The set of frequency $\{n\frac{2\pi}{T}\}_n$ is called the \emph{frequency spectrum} of Fourier series.


In approximation theory, a partial Fourier series (or truncated Fourier series)
\begin{equation}\label{eq:partial_Fourier_sum}
    s_N(x) = \sum_{n = -N}^N c_n e^{i\frac{\pi}{T}nx}
\end{equation}
is commonly used to approximate the function $f(x)$. A partial Fourier series can be transformed to a Laurent polynomial $P \in \CC[z, z^{-1}]$ by the substitution $z = e^{i\frac{2\pi}{T}x}$, i.e.,
\begin{equation}
    P(z) = \sum_{n = -N}^N c_n z^n.
\end{equation}
A Laurent polynomial $P \in \FF[z, z^{-1}]$ is a linear combination of positive and negative powers of the variable $z$ with coefficients in $\FF$. The \emph{degree} of a Laurent polynomial $P$ is the maximum absolute value of any exponent of $z$ with non-zero coefficients, denoted by $\deg(P)$. We say that a Laurent polynomial $P$ has \emph{parity} $0$ if all coefficients corresponding to odd powers of $z$ are $0$, and similarly $P$ has parity $1$ if all coefficients corresponding to even powers of $z$ are $0$.

Following the pattern of Fourier series, we first consider using $R_Z(x) = e^{-ixZ/2}$ to encode the input $x$ and let $R_Y(\cdot)$ be the trainable gate. We can write the QNN as

\begin{equation}\label{eq:QNNyzydecompose}
        U_{\bm\theta,L}^{\YZY}(x) = R_Y(\theta_0) \prod_{j = 1}^L R_Z(x) R_Y(\theta_j),
\end{equation}
and the quantum circuit is shown in Fig.~\ref{fig:qnn_circuit_yzy}.
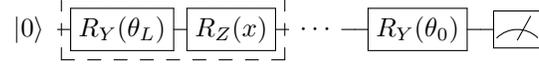
\begin{figure}[H]
    \centerline{
    \Qcircuit @C=0.5em @R=0.5em{
        \lstick{\ket{0}}&\gate{R_Y(\theta_L)}&\gate{R_Z(x)}&\qw&&\cdots&&&\qw&\gate{R_Y(\theta_0)}&\qw&\meter
        \gategroup{1}{2}{1}{3}{.7em}{--}
      }
    }
    \caption{Circuit of $U_{\bm\theta,L}^{\YZY}(x)$, where the trainable block is $R_Y(\cdot)$ and the encoding block is $R_Z(\cdot)$.}
    \label{fig:qnn_circuit_yzy}
\end{figure}

To characterize the expressivity of this kind of basic QNN, we first rigorously show that the QNN $U_{\bm\theta,L}^{\YZY}(x)$ can be represented in the form of a partial Fourier series with real coefficients.
\begin{lemma}\label{lem:qnn_yzy}
    There exist $\bm\theta = (\theta_0, \theta_1, \ldots, \theta_L)\in \RR^{L+1}$ such that
    \begin{equation}\label{eq:poly form QNNyzy_}
        U_{\bm\theta,L}^{\YZY}(x) = \begin{bmatrix}
            P(x) & -Q(x)\\
            Q^*(x) & P^*(x)
        \end{bmatrix}
    \end{equation}
    if and only if real Laurent polynomials $P, Q \in \RR[e^{ix/2}, e^{-ix/2}]$ satisfy
    
    \begin{enumerate}
        \item $\deg(P) \leq L$ and $\deg(Q) \leq L$,
        \item $P$ and $Q$ have parity $L \bmod 2$,
        \item $\forall x\in \RR$, $\abs{P(x)}^2 + \abs{Q(x)}^2 = 1$.
    \end{enumerate}
\end{lemma}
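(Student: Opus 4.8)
The plan is to prove both implications by induction on the number of layers $L$, stripping one block $R_Z(x)\,R_Y(\theta_j)$ off the circuit at a time (a quantum-signal-processing style layer-stripping argument). The right backdrop is that the $2\times 2$ matrices of the form $\begin{bmatrix} A & -B\\ B^* & A^*\end{bmatrix}$ with $A,B\in\RR[e^{ix/2},e^{-ix/2}]$ are closed under multiplication, and that both $R_Y(\theta)$ (with $A=\cos(\theta/2)$, $B=\sin(\theta/2)$) and $R_Z(x)$ (with $A=e^{-ix/2}$, $B=0$) are of this form. Hence $U_{\bm\theta,L}^{\YZY}(x)$ always has the shape \eqref{eq:poly form QNNyzy_} for some real Laurent polynomials $P,Q$, so the whole content of the lemma is the bookkeeping on degree, parity, and the normalization $|P|^2+|Q|^2=1$. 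The ``only if'' direction is the easy one; the ``if'' direction is where the work lies.

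For necessity, the base case $L=0$ is $U_{\bm\theta,0}^{\YZY}(x)=R_Y(\theta_0)$, giving $P=\cos(\theta_0/2)$, $Q=\sin(\theta_0/2)$: degree $0$, parity $0$, and $P^2+Q^2=1$. For the inductive step, write $U_{\bm\theta,L}^{\YZY}(x)=U_{\bm\theta',L-1}^{\YZY}(x)\,R_Z(x)\,R_Y(\theta_L)$ with $\bm\theta'=(\theta_0,\dots,\theta_{L-1})$, insert the inductive form for $U_{\bm\theta',L-1}^{\YZY}$, and multiply out to get $P_L=e^{-ix/2}\cos(\theta_L/2)\,P_{L-1}-e^{ix/2}\sin(\theta_L/2)\,Q_{L-1}$ and $Q_L=e^{-ix/2}\sin(\theta_L/2)\,P_{L-1}+e^{ix/2}\cos(\theta_L/2)\,Q_{L-1}$. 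Multiplying a Laurent polynomial of degree $\le L-1$ and parity $(L-1)\bmod 2$ by $e^{\pm ix/2}$ yields degree $\le L$ and parity $L\bmod 2$, so conditions 1 and 2 are inherited, and condition 3 holds because $U_{\bm\theta,L}^{\YZY}(x)$ is a product of unitaries, so the matrix in \eqref{eq:poly form QNNyzy_} is unitary and its first column has unit norm.

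For sufficiency, run this recursion in reverse. Given $P=P_L$, $Q=Q_L$ of degree $\le L$, parity $L\bmod 2$, with $|P|^2+|Q|^2=1$, I want an angle $\theta_L$ such that right-multiplying the level-$L$ matrix by $R_Y(-\theta_L)\,R_Z(-x)$ produces a level-$(L-1)$ matrix with the required properties; explicitly the candidates are $P_{L-1}=e^{ix/2}\big(\cos(\theta_L/2)\,P+\sin(\theta_L/2)\,Q\big)$ and $Q_{L-1}=e^{-ix/2}\big(\cos(\theta_L/2)\,Q-\sin(\theta_L/2)\,P\big)$. Real combinations of $P,Q$ still have parity $L\bmod 2$, so $P_{L-1},Q_{L-1}$ have parity $(L-1)\bmod 2$; in particular $P_{L-1}$ can carry no $z^{L}$ term (wrong parity), so the only obstruction to $\deg(P_{L-1})\le L-1$ is a nonzero $z^{L+1}$ coefficient, and similarly the only obstruction for $Q_{L-1}$ is a nonzero $z^{-(L+1)}$ coefficient. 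Writing $p_{\pm L},q_{\pm L}$ for the $z^{\pm L}$ coefficients of $P,Q$, these two obstructions vanish exactly when $p_L\cos(\theta_L/2)+q_L\sin(\theta_L/2)=0$ and $q_{-L}\cos(\theta_L/2)-p_{-L}\sin(\theta_L/2)=0$.

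The crux — and the only place condition 3 enters the reconstruction — is that these two linear equations for the unit vector $\big(\cos(\theta_L/2),\sin(\theta_L/2)\big)$ have a common solution. Such a nonzero vector orthogonal to both $(p_L,q_L)$ and $(q_{-L},-p_{-L})$ exists precisely when these two vectors are linearly dependent, that is, when $p_Lp_{-L}+q_Lq_{-L}=0$; and expanding $|P|^2+|Q|^2$ as a Laurent polynomial in $z=e^{ix/2}$ shows that its $z^{2L}$ coefficient equals exactly $p_Lp_{-L}+q_Lq_{-L}$, which vanishes because $|P|^2+|Q|^2\equiv 1$. (The degenerate situations — some leading coefficients zero, or $P$ or $Q\equiv 0$ — make one or both equations vacuous and are checked directly, for instance by taking $\theta_L=0$.) So a valid $\theta_L$ exists, and one then verifies that the level-$(L-1)$ data satisfies conditions 1--3: realness is immediate, the degree and parity bounds were just established, and $|P_{L-1}|^2+|Q_{L-1}|^2=1$ since the level-$(L-1)$ matrix is a product of the unitary level-$L$ matrix with $R_Y(-\theta_L)\,R_Z(-x)$. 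Iterating down to $L=0$ leaves real constants $P_0,Q_0$ with $P_0^2+Q_0^2=1$, hence $P_0=\cos(\theta_0/2)$, $Q_0=\sin(\theta_0/2)$, and the accumulated product is $R_Y(\theta_0)\prod_{j=1}^L R_Z(x)R_Y(\theta_j)=U_{\bm\theta,L}^{\YZY}(x)$, as required. I expect the simultaneous solvability of the two coefficient-killing equations, together with a careful enumeration of the degenerate cases, to be the main obstacle; everything else is degree/parity bookkeeping.
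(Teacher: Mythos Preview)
Your proposal is correct and follows essentially the same layer-stripping induction as the paper's proof: forward direction by direct multiplication, backward direction by choosing $\theta_L$ to kill the top and bottom coefficients, with the compatibility $p_Lp_{-L}+q_Lq_{-L}=0$ read off from the $z^{2L}$ coefficient of $|P|^2+|Q|^2=1$. The only cosmetic difference is that the paper works with $d=\max(\deg P,\deg Q)$ rather than $L$ and spells out the degenerate cases more explicitly (your ``for instance $\theta_L=0$'' is not the right choice in every sub-case, though you do flag that these need checking).
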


Lemma~\ref{lem:qnn_yzy} decomposes the unitary matrix of the QNN $U_{\bm\theta,L}^{\YZY}(x)$ into Laurent polynomials with real coefficients, which can be used to represent a partial Fourier series with real coefficients. The proof of Lemma~\ref{lem:qnn_yzy} uses a method of mathematical induction that is in the similar spirit of the proof of quantum signal processing~\cite{low2017optimal, haah2019product, gilyen2019quantum, chao2020findinga}, which is a powerful subroutine in Hamiltonian simulation~\cite{Childs2018a} and quantum singular value transformation~\cite{gilyen2019quantum}. The forward direction is straightforward by the definition of $U_{\bm\theta,L}^{\YZY}(x)$ in Eq.~\eqref{eq:QNNyzydecompose}. The proof of the backward direction is by induction in $L$ where the base case $L=0$ holds trivially. For $L>0$, we prove that for any $U_{\bm\theta,L}^{\YZY}(x)$ where $P,Q$ satisfy the three conditions, there exists a unique block $R_Y^\dagger(\theta_k)R_Z^\dagger(x)$ such that polynomials $\hat{P}$ and $\hat{Q}$ in $U_{\bm\theta,L}^{\YZY}(x)R_Y^\dagger(\theta_k)R_Z^\dagger(x)$ satisfy the three conditions for $L-1$. Lemma~\ref{lem:qnn_yzy} explicitly correlates the frequency spectrum of the Fourier series and the number of layers $L$ of the QNN. The proof of Lemma~\ref{lem:qnn_yzy} also illustrates the exact correspondence between the Fourier coefficients and parameters of trainable gates. A detailed proof can be found in Appendix~\ref{appendix:lemma qnn_yzy}.

Other than characterizing the QNN with Laurent polynomials, we also need to specify the achievable Laurent polynomials $P(x)$ for which there exists a corresponding $Q(x)$ satisfying the three conditions in Lemma~\ref{lem:qnn_yzy}. It has been proved in Refs.~\cite{haah2019product, silva2022fourierbased, wang2022quantuma} that the only constraint is $\abs{P(x)}\leq 1$ for all $x \in \RR$. That is, for any $P \in \RR[e^{ix/2}, e^{-ix/2}]$ with $\deg(P) \leq L$ and parity $L \bmod 2$, if $\abs{P(x)} \leq 1$ for all $x\in \RR$, there exists a $Q\in \RR[e^{ix/2}, e^{-ix/2}]$ with $\deg(P) \leq L$ and parity $L \bmod 2$ such that $\abs{P(x)}^2 + \abs{Q(x)}^2 = 1$ for all $x \in \RR$.

By Lemma~\ref{lem:qnn_yzy}, the partial Fourier series corresponding to the QNN $U_{\bm\theta,L}^{\YZY}(x)$ only has real coefficients. With the exponential form of Eq.~\eqref{eq:Fourier_exp}, a Fourier series with real coefficients only has $\cos(nx)$ terms, which means $U_{\bm\theta,L}^{\YZY}(x)$ can be used to approximate any even function on the interval $[-\pi, \pi]$. Thus we establish the following proposition, whose proof is deferred to Appendix~\ref{appendix:QNN_yzy_function_ket0}.

\begin{proposition}\label{prop:QNN_yzy_function_ket0}
For any even square-integrable function $f:[-\pi, \pi] \to [-1, 1]$ and for all $\epsilon > 0$, there exists a QNN $U_{\bm{\theta},L}^{\YZY}(x)$ such that $\ket{\psi(x)} = U_{\bm{\theta},L}^{\YZY}(x)\ket{0}$ satisfies
    \begin{equation}\label{eq:yzy_even_function_}
        \norm{\expval{Z}{\psi(x)}-f(x)}\leq \epsilon.
    \end{equation}
\end{proposition}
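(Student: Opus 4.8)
The plan is to turn the claim into a purely classical approximation problem — approximating a nonnegative function by the square of a bounded cosine polynomial — and settle that with a standard summability argument; here $\norm{\cdot}$ must be read as the $L^2$-norm on $[-\pi,\pi]$, since $f$ is only square-integrable and the QNN output is a continuous function, so uniform approximation is impossible in general. First I would read off the measurement output from Lemma~\ref{lem:qnn_yzy}: whenever $\bm\theta,L$ are such that the lemma applies with Laurent polynomials $P,Q$, the state $\ket{\psi(x)}=U_{\bm\theta,L}^{\YZY}(x)\ket{0}$ is the first column of the matrix there, i.e.\ $\ket{\psi(x)}=P(x)\ket{0}+Q^*(x)\ket{1}$, so using $Z\ket{0}=\ket{0}$, $Z\ket{1}=-\ket{1}$ and condition~3 of the lemma,
\[
\expval{Z}{\psi(x)} \;=\; \abs{P(x)}^2-\abs{Q(x)}^2 \;=\; 2\abs{P(x)}^2-1 ,
\]
so the output depends only on $P$.

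Next I would specialize to even $L=2N$ and take $P$ to be a real cosine polynomial $P(x)=\sum_{k=0}^{N}c_k\cos(kx)$ with $c_k\in\RR$ and $\abs{P(x)}\le 1$ for all $x$. Since $\cos(kx)=\tfrac12\big((e^{ix/2})^{2k}+(e^{ix/2})^{-2k}\big)$, such a $P$ lies in $\RR[e^{ix/2},e^{-ix/2}]$, has $\deg(P)\le 2N=L$ and parity $0=L\bmod 2$, and is bounded by $1$; by the discussion following Lemma~\ref{lem:qnn_yzy} this bound is exactly what guarantees a matching $Q$, so Lemma~\ref{lem:qnn_yzy} produces $\bm\theta\in\RR^{L+1}$ realizing $P$. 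As $P$ is real-valued, $\abs{P(x)}^2=P(x)^2$, and the realized QNN outputs $\expval{Z}{\psi(x)}=2P(x)^2-1$. Setting $h(x):=\tfrac12(f(x)+1)$, which is even, measurable and valued in $[0,1]$ (hence in $L^2[-\pi,\pi]$) by the hypothesis $f([-\pi,\pi])\subseteq[-1,1]$, we get $\norm{\expval{Z}{\psi(x)}-f(x)}=2\norm{P^2-h}$. So it is enough to show that for every $\delta>0$ there is a real cosine polynomial $P$ with $\norm{P}_\infty\le 1$ and $\norm{P^2-h}\le\delta$.

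To build $P$, let $g:=\sqrt{h}$, which is even, measurable and bounded by $1$, hence in $L^2$; viewing it as a function on $\RR/2\pi\ZZ$ (consistent since $g(-\pi)=g(\pi)$ by evenness), let $P_N$ be its $N$-th Fejér mean, i.e.\ the convolution of $g$ with the $N$-th Fejér kernel. Then $P_N$ is a real trigonometric polynomial of degree $\le N$, even (so a cosine polynomial) because $g$ and the Fejér kernel are even; it satisfies $0\le P_N\le\norm{g}_\infty\le 1$ because the Fejér kernel is nonnegative with average $1$; and $P_N\to g$ in $L^2$. Choosing $N$ with $\norm{P_N-g}\le\delta/4$ gives
\[
\norm{P_N^2-h} \;=\; \norm{(P_N-g)(P_N+g)} \;\le\; \norm{P_N+g}_\infty\,\norm{P_N-g} \;\le\; 2\cdot\tfrac{\delta}{4} \;=\; \tfrac{\delta}{2},
\]
and taking $P=P_N$, $L=2N$ and $\delta\le\epsilon$ completes the proof.

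The step I expect to be the main obstacle is respecting the constraint $\abs{P(x)}\le 1$ that Lemma~\ref{lem:qnn_yzy} requires: ordinary partial Fourier (cosine) sums of $\sqrt{h}$ converge to it in $L^2$ but need not be uniformly bounded — indeed they can fail to converge uniformly even for continuous $\sqrt{h}$ — so one cannot simply normalize them to enforce the bound. Passing to Fejér means (or any nonnegative summability kernel) removes this obstacle, simultaneously yielding the uniform bound $\norm{P_N}_\infty\le\norm{g}_\infty\le 1$ and $L^2$-convergence; the remaining estimates are the elementary ones above. Note the argument uses only square-integrability of $f$, together with the essential boundedness $\abs{f}\le 1$ that makes the square-root step legitimate.
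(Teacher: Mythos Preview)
Your argument follows the same strategy as the paper's: read off $\expval{Z}{\psi}=2\abs{P}^2-1$ from Lemma~\ref{lem:qnn_yzy}, take $P$ to be a real cosine polynomial approximating $g=\sqrt{(1+f)/2}$, and invoke the lemma together with the existence of a companion $Q$. The paper does this in two stages (first approximate $f$ by a partial sum $f_K$, then approximate $\sqrt{(1+f_K)/2}$ by a further partial sum $g_L$) while you go in one step directly from $g$, but the underlying idea is identical.

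Where you genuinely improve on the paper is in the treatment of the constraint $\abs{P(x)}\le 1$ required for the existence of $Q$. The paper's proof asserts that ``conditions of Lemma~\ref{lem:qnn_yzy} are satisfied'' for its partial sum $g_L$, but ordinary partial Fourier sums of a function bounded by $1$ need not themselves be bounded by $1$, so as written that step is incomplete. Your switch to Fej\'er means cleanly closes this gap: the nonnegativity of the Fej\'er kernel gives $\norm{P_N}_\infty\le\norm{g}_\infty\le 1$ for free, while still delivering $L^2$ convergence. This is exactly the obstacle you flagged, and your fix is the right one.
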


Although the above result states that the QNN $U_{\bm{\theta},L}^{\YZY}(x)\ket{0}$ is able to approximate a class of even functions within arbitrary precision, we can see that the main limitation of the expressive power of QNN $U_{\bm{\theta},L}^{\YZY}(x)$ is the real Fourier coefficients, which may restrict its universal approximation capability. 

To tackle this issue, our idea is to introduce complex coefficients to the corresponding Laurent polynomials, which can be implemented by adding a trainable Pauli $Z$ rotation operator in each layer. Specifically, we consider the QNN
\begin{equation}\label{eq:QNNyzzyzdecompose}
    U_{\bm\theta, \bm\phi, L}^{\WZW}(x) = R_Z(\varphi) W(\theta_0, \phi_0) \prod_{j = 1}^L R_Z(x)  W(\theta_j, \phi_j),
\end{equation}
where each trainable block is $W(\theta_j, \phi_j) \coloneqq R_Y(\theta_j)R_Z(\phi_j)$. Here we add an extra $R_Z(\varphi)$ gate to adjust the relative phase between $P$ and $Q$. The quantum circuit of $U_{\bm\theta, \bm\phi, L}^{\WZW}(x)$ is illustrated in Fig.~\ref{fig:qnn_circuit_yzzyz}.

\begin{figure}[H]
    \centerline{
    \Qcircuit @C=0.5em @R=0.5em{
        \lstick{\ket{0}}&\gate{R_Z(\phi_L)}&\gate{R_Y(\theta_L)}&\gate{R_Z(x)}&\qw&&\cdots&&&\qw&\gate{R_Z(\phi_0)}&\gate{R_Y(\theta_0)}&\gate{R_Z(\varphi)}&\qw&\meter
        \gategroup{1}{2}{1}{4}{.7em}{--}
      }
    }
    \caption{Circuit of $U_{\bm\theta, \bm\phi, L}^{\WZW}(x)$, where the trainable block is composed of $R_Y(\cdot)$ and $R_Z(\cdot)$, and the encoding block is $R_Z(\cdot)$.}
    \label{fig:qnn_circuit_yzzyz}
\end{figure}
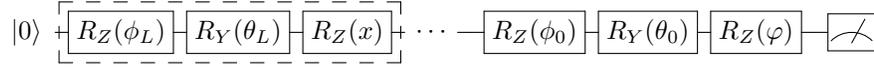

To characterize the capability of this QNN, we establish the following Lemma which implies $U_{\bm\theta, \bm\phi, L}^{\WZW}(x)$ can express any Fourier partial sum with complex Fourier coefficients.
\begin{lemma}\label{lem:qnn_yzzyz}
There exist $\bm\theta = (\theta_0, \theta_1, \ldots, \theta_L)\in \RR^{L+1}$ and $\bm\phi = (\varphi, \phi_0, \phi_1, \ldots, \phi_L)\in \RR^{L+2}$ such that
\begin{equation}\label{eq:poly form QNNwzw_}
    U_{\bm\theta,\bm\phi,L}^{\WZW}(x) = \begin{bmatrix}
        P(x) & -Q(x)\\
        Q^*(x) & P^*(x)
    \end{bmatrix}
\end{equation}
if and only if Laurent polynomials $P, Q \in \CC[e^{ix/2}, e^{-ix/2}]$ satisfy

\begin{enumerate}
    \item $\deg(P) \leq L$ and $\deg(Q) \leq L$,
    \item $P$ and $Q$ have parity $L \bmod 2$,
    \item $\forall x\in \RR$, $\abs{P(x)}^2 + \abs{Q(x)}^2 = 1$.
\end{enumerate}
\end{lemma}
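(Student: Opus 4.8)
The plan is to prove both directions by adapting the argument behind Lemma~\ref{lem:qnn_yzy}, now keeping track of complex rather than real Laurent polynomials. For the forward direction I would expand the product in Eq.~\eqref{eq:QNNyzzyzdecompose}: each factor $R_Z(\varphi)$, $R_Y(\theta_j)$, $R_Z(\phi_j)$ and $R_Z(x)$ is an element of $SU(2)$ of the form $\left(\begin{smallmatrix} a & -b \\ b^* & a^* \end{smallmatrix}\right)$, and this form is preserved under multiplication, so $U_{\bm\theta,\bm\phi,L}^{\WZW}(x)$ has the claimed matrix shape and condition~3 is just the unit norm of its first column. Writing $z = e^{ix/2}$ so that $R_Z(x) = \operatorname{diag}(z^{-1}, z)$, a short induction on the number of encoding gates shows that each of the $L$ copies of $R_Z(x)$ raises the maximal degree by at most one and flips the parity of every entry, whereas the interleaved constant gates preserve both; since the circuit contains exactly $L$ encoding gates, this gives $\deg(P),\deg(Q)\le L$ and parity $L\bmod 2$, i.e.\ conditions~1 and~2.

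For the backward direction I would induct on $L$. The base case $L=0$ is the ZYZ Euler decomposition: $U_{\bm\theta,\bm\phi,0}^{\WZW}(x) = R_Z(\varphi)R_Y(\theta_0)R_Z(\phi_0)$ realizes an arbitrary element of $SU(2)$, so any constants $P,Q$ with $\abs{P}^2+\abs{Q}^2=1$ are attained by solving $\cos(\theta_0/2)=\abs{P}$, $\varphi+\phi_0=-2\arg P$ and $\varphi-\phi_0=-2\arg Q$ (with the obvious conventions when $P$ or $Q$ vanishes). For the inductive step, given $P,Q$ satisfying the three conditions for some $L\ge 1$, I would peel off the rightmost block $R_Z(x)W(\theta_L,\phi_L)$: the entries $\hat P,\hat Q$ of $U_{\bm\theta,\bm\phi,L}^{\WZW}(x)\,W(\theta_L,\phi_L)^\dagger R_Z(x)^\dagger$ come out to $\hat P = z\big(uc\,P + u^{-1}s\,Q\big)$ and $\hat Q = z^{-1}\big(u^{-1}c\,Q - us\,P\big)$, with $c=\cos(\theta_L/2)$, $s=\sin(\theta_L/2)$, $u=e^{i\phi_L/2}$. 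Forcing the top-degree monomial $z^{L+1}$ in $\hat P$ and $z^{-(L+1)}$ in $\hat Q$ to vanish reduces to the two scalar equations $uc\,p_L + u^{-1}s\,q_L = 0$ and $u^{-1}c\,q_{-L} - us\,p_{-L} = 0$ on the coefficients $p_{\pm L},q_{\pm L}$ of $P,Q$ at the extremal powers $z^{\pm L}$; parity of $\hat P,\hat Q$ and the identity $\abs{\hat P}^2+\abs{\hat Q}^2=1$ are then inherited automatically, and applying the induction hypothesis to $(\hat P,\hat Q)$ produces the remaining angles $\varphi,\theta_0,\dots,\theta_{L-1},\phi_0,\dots,\phi_{L-1}$.

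The crux, and the step I expect to be the main obstacle, is checking that these two scalar equations admit a common solution with \emph{real} $\theta_L,\phi_L$; this is exactly where condition~3 is needed. Comparing the coefficient of $z^{2L}$ on the two sides of $\abs{P(x)}^2+\abs{Q(x)}^2=1$ forces $p_L\overline{p_{-L}} + q_L\overline{q_{-L}} = 0$, hence $\abs{p_L}\abs{p_{-L}} = \abs{q_L}\abs{q_{-L}}$, and after elimination $\tan^2(\theta_L/2) = -\,q_{-L}p_L/(p_{-L}q_L) = \abs{q_{-L}}^2/\abs{p_{-L}}^2 \ge 0$, so $\theta_L$ can be chosen in $[0,\pi]$; the corresponding value $u^2 = -\tan(\theta_L/2)\,q_L/p_L$ then has modulus one, again using $\abs{p_L}\abs{p_{-L}}=\abs{q_L}\abs{q_{-L}}$, which makes $\phi_L$ real and makes the second equation hold automatically. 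What remains is routine bookkeeping of the degenerate cases in which some of $p_{\pm L},q_{\pm L}$ vanish (so that one of the monomials $z^{\pm(L+1)}$ is already absent and its equation is vacuous, or one is forced to take $\theta_L\in\{0,\pi\}$, or $W(\theta_L,\phi_L)=I$); these are dispatched exactly as in the quantum-signal-processing induction behind Lemma~\ref{lem:qnn_yzy}, and they do not change the overall structure of the argument.
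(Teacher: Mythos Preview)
Your proposal is correct and follows essentially the same approach as the paper: induction on $L$ in both directions, with the backward step peeling off one layer $R_Z(x)W(\theta_L,\phi_L)$ and solving the two scalar equations on the extremal coefficients using the relation $p_L\overline{p_{-L}}+q_L\overline{q_{-L}}=0$ coming from condition~3. Your verification that the resulting $\theta_L,\phi_L$ are genuinely real (via $\tan^2(\theta_L/2)=|q_{-L}|^2/|p_{-L}|^2$ and $|u^2|=1$) is in fact spelled out more carefully than in the paper, which simply sets $\tan(\theta_k/2)e^{-i\phi_k}=-p_d/q_d$ and leaves the compatibility with the second equation implicit; the only cosmetic difference is that the paper indexes the extremal coefficients by $d=\max(\deg P,\deg Q)$ rather than $L$, which you correctly absorb into your degenerate-case bookkeeping.
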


Lemma~\ref{lem:qnn_yzzyz} demonstrates a decomposition of the QNN $U_{\bm\theta,\bm\phi,L}^{\WZW}(x)$ into Laurent polynomials with complex coefficients, which can be used to represent a partial Fourier series with complex coefficients in form of Eq.~\eqref{eq:partial_Fourier_sum}. The proof of Lemma~\ref{lem:qnn_yzzyz} is similar to the proof of Lemma~\ref{lem:qnn_yzy} and its details are provided in Appendix~\ref{appendix:qnn_yzzyz}.
Again, the proof demonstrates the effect of parameterized gates on the control of Fourier coefficients. Similarly, the constraint for the achievable Laurent polynomials $P(x)$ in $U_{\bm\theta,\bm\phi,L}^{\WZW}(x)$ is that $\abs{P(x)}\leq 1$ for all $x \in \RR$~\cite{silva2022fourierbased, wang2022quantuma}.

We then prove in the following Theorem~\ref{thm:QNN_yzzyz_function_Z_measurement} that $U_{\bm\theta,\bm\phi,L}^{\WZW}(x)$ is able to approximate any square-integrable function within arbitrary precision, using the well-established result in Fourier analysis. The detailed proof is deferred to Appendix~\ref{appendix:QNN_yzzyz_function_Z_measurement}.

\begin{theorem}[Univariate approximation properties of single-qubit QNNs.]\label{thm:QNN_yzzyz_function_Z_measurement}
For any univariate square-integrable function $f:[-\pi, \pi] \to [-1, 1]$ and for all $\epsilon > 0$, there exists a QNN $U_{\bm{\theta},\bm\phi,L}^{\WZW}(x)$ such that $\ket{\psi(x)} = U_{\bm{\theta},\bm\phi,L}^{\WZW}(x)\ket{0}$ satisfies
    \begin{equation}\label{eq:yzzyz_function_}
        \norm{\expval{Z}{\psi(x)} - f(x)} \leq \epsilon.
    \end{equation}
\end{theorem}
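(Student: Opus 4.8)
The plan is to realize the target function as the $(0,0)$-entry of $U^\dagger M U$ with $M = Z$, where $U = U_{\bm\theta,\bm\phi,L}^{\WZW}(x)$, and then invoke Lemma~\ref{lem:qnn_yzzyz} together with the achievability fact for $P$ to reduce everything to a purely classical Fourier-approximation problem. First I would compute the expectation value: writing the first column of $U$ as $(P(x), Q^*(x))^T$ and using $Z = \operatorname{diag}(1,-1)$, we get $\expval{Z}{\psi(x)} = |P(x)|^2 - |Q^*(x)|^2 = |P(x)|^2 - |Q(x)|^2$, and by condition~3 of Lemma~\ref{lem:qnn_yzzyz} this equals $2|P(x)|^2 - 1$. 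So the QNN output is exactly $g(x) \coloneqq 2|P(x)|^2 - 1$ for any achievable Laurent polynomial $P$, i.e. any $P \in \CC[e^{ix/2},e^{-ix/2}]$ with $\deg(P)\le L$, parity $L\bmod 2$, and $|P(x)|\le 1$ on $\RR$.

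The reduction is then: given $f:[-\pi,\pi]\to[-1,1]$ square-integrable and $\epsilon>0$, I want a Laurent polynomial $P$ (in $e^{ix/2}$) with $|P|\le 1$ such that $\|\,2|P(x)|^2 - 1 - f(x)\,\|\le\epsilon$. Equivalently, setting $h(x) \coloneqq \tfrac12(f(x)+1) \in [0,1]$, I need $|P(x)|^2$ to approximate $h(x)$. The natural route is to first approximate $\sqrt{h(x)}$ — which is itself square-integrable on $[-\pi,\pi]$ and bounded in $[0,1]$ — by a partial Fourier series. By standard Fourier analysis (Carleson, or more simply $L^2$-convergence of Fourier partial sums, or Fejér/Cesàro means for a uniform statement on continuous approximants), there is a trigonometric polynomial $p(x)$ with $\|p - \sqrt{h}\|$ small; after a slight rescaling by a factor $1-\delta$ we may also enforce $|p(x)|\le 1$ pointwise while keeping the approximation error controlled. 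Since $\sqrt{h}$ is even? — no, $h$ need not be even, so $p$ will be a genuine complex-coefficient trigonometric polynomial, which is exactly why the $\WZW$ ansatz (with its $R_Z$ gates producing complex Fourier coefficients, per Lemma~\ref{lem:qnn_yzzyz}) is needed rather than the $\YZY$ one. Taking $P(x) = p(x)$, viewed as a Laurent polynomial in $z = e^{ix/2}$ (so the base frequency matches the $R_Z(x)=e^{-ixZ/2}$ encoding, which shifts frequencies in steps of $1/2$; one absorbs the period-$T=\pi$ bookkeeping here), we get $|P(x)|^2 = |p(x)|^2$ close to $h(x)$, hence $2|P(x)|^2-1$ close to $f(x)$. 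Finally $L$ is chosen as the degree of $P$, which is finite once the Fourier truncation level is fixed.

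The main obstacle — and the place where care is genuinely needed — is the passage from "approximate $\sqrt{h}$ in some norm" to "approximate $h$ by $|P|^2$ with the hard constraint $|P(x)|\le 1$ everywhere." Squaring does not commute nicely with $L^2$ approximation unless one controls the sup-norm: $\||p|^2 - h\| = \||p|^2 - |\sqrt h|^2\| \le \|\,|p| - \sqrt h\,\|\cdot\||p|+\sqrt h\|_\infty$, so I need $|p|$ uniformly bounded (which the rescaling gives) and I need the approximation of $\sqrt h$ to be in a norm compatible with the final $L^2$ (or, if one wants a pointwise/uniform final bound, one should start from a uniform approximation of a continuous proxy for $\sqrt h$ and handle the square-integrable case by a preliminary $L^2$-density step replacing $f$ by a continuous function). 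A secondary subtlety is the parity constraint "parity $L\bmod 2$": a general trigonometric polynomial of degree $N$ in $e^{ix/2}$ need not have fixed parity, but one can always pad — multiply by a harmless unit-modulus monomial or increase $L$ by one — to meet the parity condition without changing $|P(x)|$, so this is a bookkeeping point rather than a real difficulty. Modulo these analytic hygiene issues, the theorem follows by combining the entry computation above, the achievability of all $P$ with $|P|\le 1$, and classical density of trigonometric polynomials in $L^2[-\pi,\pi]$.
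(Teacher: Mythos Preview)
Your proposal is correct and follows essentially the same route as the paper: compute $\expval{Z}{\psi(x)} = 2|P(x)|^2 - 1$, approximate $\sqrt{(1+f)/2}$ by a trigonometric polynomial, and invoke Lemma~\ref{lem:qnn_yzzyz} together with the achievability fact for $P$. If anything, you are more explicit than the paper about the analytic hygiene (enforcing $|P|\le 1$ via rescaling/Fej\'er means, the parity bookkeeping), whereas the paper's proof simply asserts that the truncated series $g_L$ ``obviously'' satisfies the conditions of the lemma.
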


Up till now we only let the encoding gate be the $R_Z(x)$ gate, what if we use other rotation operator gates to encode the data? It actually does not matter which one we choose as the encoding gate if the trainable gates are universal. Note that Pauli rotation operators $R_X(x), R_Y(x), R_Z(x)$ have two eigenvalues $\cos(x/2) \pm i \sin(x/2)$, and they can be diagonalized as $Q^\dagger R_Z(x) Q$. Merging unitaries $Q^\dagger$ and $Q$ to universal trainable gates gives the QNN that uses $R_Z(x)$ as the encoding gate. We hereby define the generic single-qubit QNNs as
\begin{equation}\label{eq:QNNvzvdecompose}
    U_{\bm\theta, \bm\phi, \bm\lambda, L}^{\UZU}(x) = U_3(\theta_0, \phi_0, \lambda_0) \prod_{j = 1}^L R_Z(x)  U_3(\theta_j, \phi_j, \lambda_j),
\end{equation}
where each trainable block is the generic rotation gate
\begin{equation}\label{eq:universal_gate}
    U_3(\theta, \phi, \lambda)=
        \begin{bmatrix}
            \cos\frac\theta2&-e^{i\lambda}\sin\frac\theta2\\
            e^{i\phi}\sin\frac\theta2&e^{i(\phi+\lambda)}\cos\frac\theta2
        \end{bmatrix}.
\end{equation}
By definition, any $L$-layer single-qubit QNN, including $U_{\bm{\theta},\bm\phi,L}^{\WZW}$, can be expressed as $U_{\bm\theta, \bm\phi, \bm\lambda, L}^{\UZU}$. Thus $U_{\bm\theta, \bm\phi, \bm\lambda, L}^{\UZU}$ is surely a universal approximator.

\section{Limitations of single-qubit QNNs}\label{sec:limitation}
We have proved that a single-qubit QNN is a universal approximator for univariate functions, it is natural to consider its limitations. Is there a single-qubit QNN that can approximate arbitrary multivariate functions? 
We answer this question from the perspective of multivariate Fourier series. Specifically, we consider the generic form of single-qubit QNNs defined in Eq.~\eqref{eq:QNNvzvdecompose} and upload the classical data $\bm x \coloneqq (x^{(1)},x^{(2)},\cdots,x^{(d)}) \in\RR^d$ as
\begin{align}\label{eq:single_qnn_multivariate_function_unitary}
   U_{\bm\theta,L}(\bm{x}) = U_3(\theta_0, \phi_0, \lambda_0) \prod_{j = 1}^L R_Z(x_j)  U_3(\theta_j, \phi_j, \lambda_j),
\end{align}
where each $x_j \in \bm x$ and $L\in\NN^+$. Without loss of generality, assume that each dimension $x^{(i)}$ is uploaded the same number of times, denoted by $K$. Naturally, we have $Kd=L$. Further, we rewrite the output of QNNs defined in Eq.~\eqref{eq:output_of_qnn} as the following form.
\begin{align}\label{eq:multivariate_fourier_series_form}
    f_{\bm{\theta},L}(\bm x)=\sum_{\bm\omega\in\Omega}c_{\bm\omega}e^{i\bm\omega \cdot \bm x},
\end{align}
where $\Omega=\{-K,\cdots,0,\cdots,K\}^d$, and the $c_{\bm \omega}$ is determined by parameters $\bm \theta$ and the observable $M$. A detailed analysis can be found in Appendix~\ref{appendix:fourier_frequency_analysis}. We can see that Eq.~\eqref{eq:multivariate_fourier_series_form} cannot be represented as a $K$-truncated multivariate Fourier series. Specifically, by the curse of dimensionality, it requires exponentially many terms in $d$ to approximate a function in $d$ dimensions. However, for $f_{\bm \theta,L}(\bm x)$, the degrees of freedom grow linearly with the number of layers $L$. It implies that single-qubit native QNNs potentially lack the capability to universally approximate arbitrary multivariate functions from the perspective of the Fourier series.

Despite the potential limitation of native QNNs in multivariate approximation, it has been proved that a single-qubit \textbf{hybrid} QNN can approximate arbitrary multivariate functions~\cite{goto2021universal, perez2021one}.  However, the UAP of hybrid QNNs is fundamentally different from the native model that we investigated. Those hybrid models involve trainable weights either in data pre-processing or post-processing. Specifically, introducing trainable weights in data pre-processing is equivalent to multiplying each frequency of the Fourier series by an arbitrary real coefficient, i.e.\
\begin{equation}
    S(wx) = R_Z(wx) = e^{-iw\frac{x}{2}Z}.
\end{equation}
Therefore it enriches the frequency spectrum of native QNNs, which only contain integer multiples of the fundamental frequency. It can also be readily extended to the encoding of multi-dimensional data $\bm x \coloneqq  (x^{(1)},x^{(2)},\cdots,x^{(d)})$ as
\begin{equation}
    R_Z(w_1 x^{(1)})R_Z(w_2 x^{(2)})\cdots R_Z(w_d x^{(d)}) = R_Z(\bm w \cdot \bm x) = e^{-\frac{1}{2} i \bm w \cdot \bm x Z},
\end{equation}
where $\bm w = (w_1, \ldots, w_d)$ is a vector of trainable weights. Using such an encoding method enables a single-qubit QNN to approximate any continuous multivariate function~\cite{perez2021one}. We notice that, although the trainable weights enrich the frequency spectrum of the Fourier series, the capability of hybrid QNNs to approximate arbitrary multivariate functions is not obtained using the multivariate Fourier series, but the universal approximation theorem~\cite{cybenko1989approximation, hornik1991approximation} in machine learning theory. In another word, the multivariate UAP of a hybrid QNN mostly comes from the classical structure, and the QNN serves as an activation function $\sigma(x) = e^{-ix}$ in the universal approximation theorem. This fact might be able to shed some light on the reason why a hybrid QNN does not provide quantum advantages over the classical NN.


\section{Numerical experiments}\label{sec:experiments}
In order to better illustrate the expressive power of single-qubit native QNNs, we supplement the theoretical results with numerical experiments. Specifically, we demonstrate the flexibility and approximation capability of single-qubit native QNNs in Section~\ref{subsec:univariate_function_approximation}. The limitations of single-qubit QNNs are illustrated in Section~\ref{subsec:multivariate_function_approximation} through the numerical experiments on approximating multivariate functions. All simulations are carried out with the Paddle Quantum toolkit on the PaddlePaddle Deep Learning Platform, using a desktop with an 8-core i7 CPU and 32GB RAM.

\subsection{Univariate function approximation}\label{subsec:univariate_function_approximation}
A damping function $f(x)=\frac{\sin{(5x)}}{5x}$ is used to demonstrate the approximation performance of single-qubit native QNN models. The dataset consists of 300 data points uniformly sampled from the interval $[0, \pi]$, from which 200 are selected for the training set and 100 for the test set. Since the function $f(x)$ is an even function, we use the QNN model as defined in Eq.~\eqref{eq:QNNyzydecompose}. The parameters of trainable gates are initialized from the uniform distribution on $[0, 2\pi]$. We adopt a variational quantum algorithm, where a gradient-based optimizer is used to search and update parameters in the QNN. The mean squared error (MSE) serves as the loss function. Here the Adam optimizer is used with a learning rate of 0.1. We set the training iterations to be 100 with a batch size of 20 for all experiments.

While approximating a function $f(x)$ by a truncated Fourier series, the approximation error decreases as the number of expansion terms increases. As shown in Lemma~\ref{lem:qnn_yzzyz}, the frequency spectrum and Fourier coefficients will be extended by consecutive repetitions of the encoding gate and trainable gate. The numerical results in Fig.~\ref{fig:depth_compare} illustrate that the approximation error decreases as the number of layers increases, which 
are consistent with our theoretical analysis. 


\begin{figure}[htbp!]
    \centering
    \includegraphics[width=0.8\textwidth]{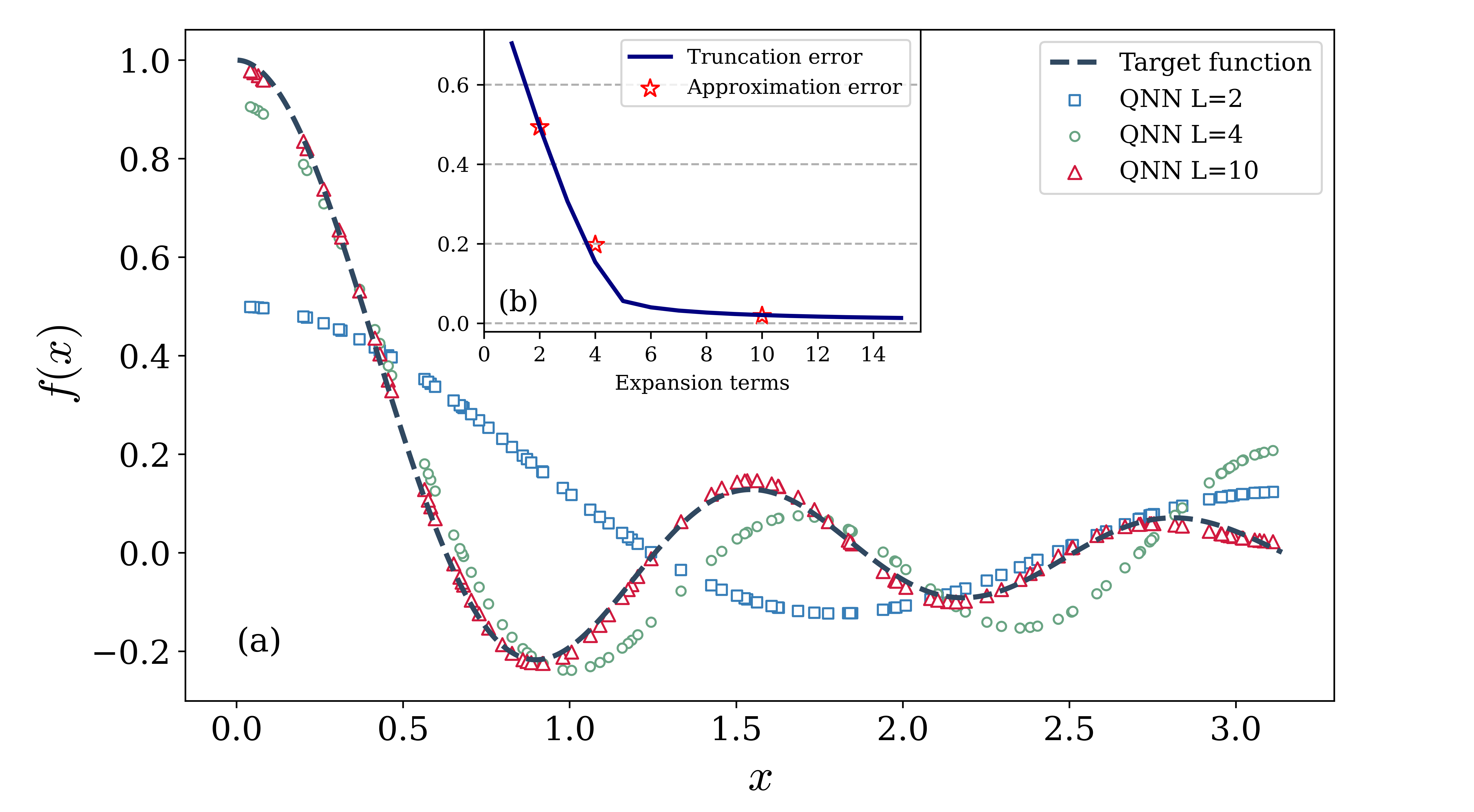}
    \caption{Panel (a) is the approximation results of $U^{\YZY}_{\bm\theta, L}$ with different $L$. Panel (b) shows the approximation error. Here the theoretical truncation error is calculated using the uniform norm $\norm{f_N(x)-f(x)}_\infty$, where $f_N(x)$ is the truncated Fourier series of $f(x)$ with $N$ terms.} 
    \label{fig:depth_compare}
\end{figure}

To further show the flexibility and capability of single-qubit QNNs, we pick a square wave function as the target function. The training set contains 400 data points sampled from the interval $[0,20]$. The numerical results are illustrated in Fig.~\ref{fig:periodic-extra-comparison}. By simply repeating 45 layers, the single-qubit QNN $U_{\bm\theta, \bm\phi, L}^{\WZW}(x)$ learns the function hidden beneath the training data. In particular, the approximation works well not only for input variables located between the training data but also outside of the region, because the Fourier series has a natural capability in dealing with periodic functions.

\begin{figure}[htbp!]
    \centering
    \includegraphics[width=0.75\textwidth]{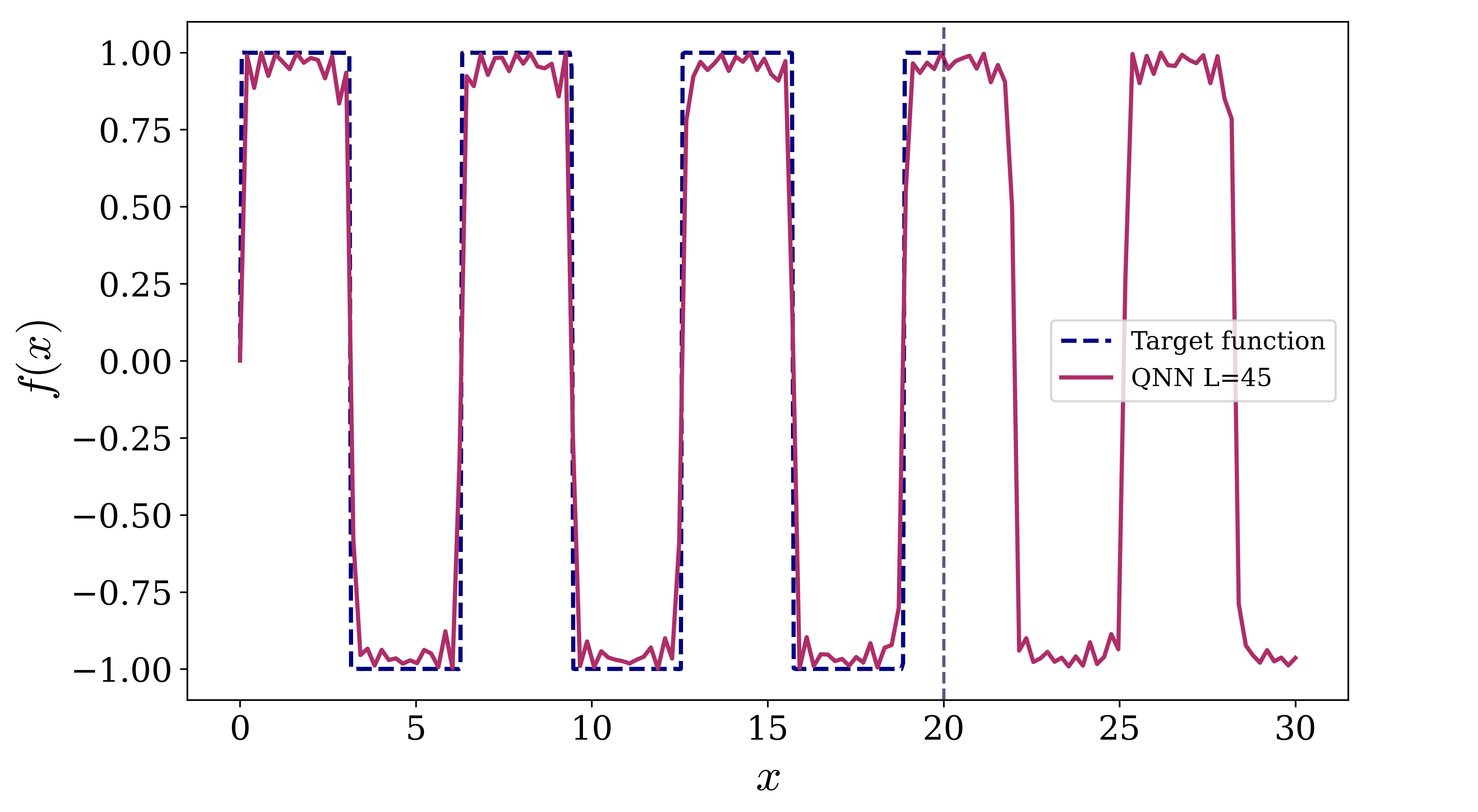}
    \caption{The result of approximating the square wave function by a $45$-layer QNN $U_{\bm\theta, \bm\phi, L}^{\WZW}(x)$. For showing periodic extrapolation, and the test region is extend to $[0,30]$.} 
    \label{fig:periodic-extra-comparison}
\end{figure}

\subsection{Multivariate function approximation}\label{subsec:multivariate_function_approximation}
We numerically demonstrate the limitations of single-qubit native QNNs in approximate multivariate functions. We examine the convergence of the loss as the number of layers of the circuit increases and compare the outcome with the target function. Specifically, we consider a bivariate function $f(x,y) = (x^2+y-1.5\pi)^2 + (x+y^2-\pi)^2$ as the target function. Note that $f(x,y)$ is normalized on the interval $[-\pi,\pi]^2$, i.e., $-1\leq f(x,y)\leq 1$.

The training set consists of 400 data points sampled from interval $[-\pi,\pi]^2$. We use the single-qubit QNN with various numbers of layers defined as Eq.~\eqref{eq:single_qnn_multivariate_function_unitary} to learn the target function.  The experimental setting is the same as in the univariate function approximation. In order to reduce the effect of randomness, the experimental results are averaged over 5 independent training instances.

Fig.~\ref{fig:multivariate_function} shows that the single-qubit native QNN has difficulty in approximating bivariate functions. The approximation result of QNN as shown in Fig.~\ref{fig:Himmelblau_approx} is quite different from the target function, even for a very deep circuit of $40$ layers. Also, the training loss in Fig.~\ref{fig:Himmelblau_loss} does not decrease as the number of layers increases. Note that the target function is only bivariate here, the limitations of single-qubit native QNNs will be more obvious in the case of higher dimensions. We further propose a possible strategy that extends single-qubit QNNs to multiple qubits, which could potentially overcome the limitations and handle practical classification tasks, see  Appendix~\ref{sec:extension_scheme} for details.

\begin{figure}[htbp]
\centering
\subfloat[Target function.
]{\label{fig:Himmelblau}{\includegraphics[width=0.32\textwidth]{./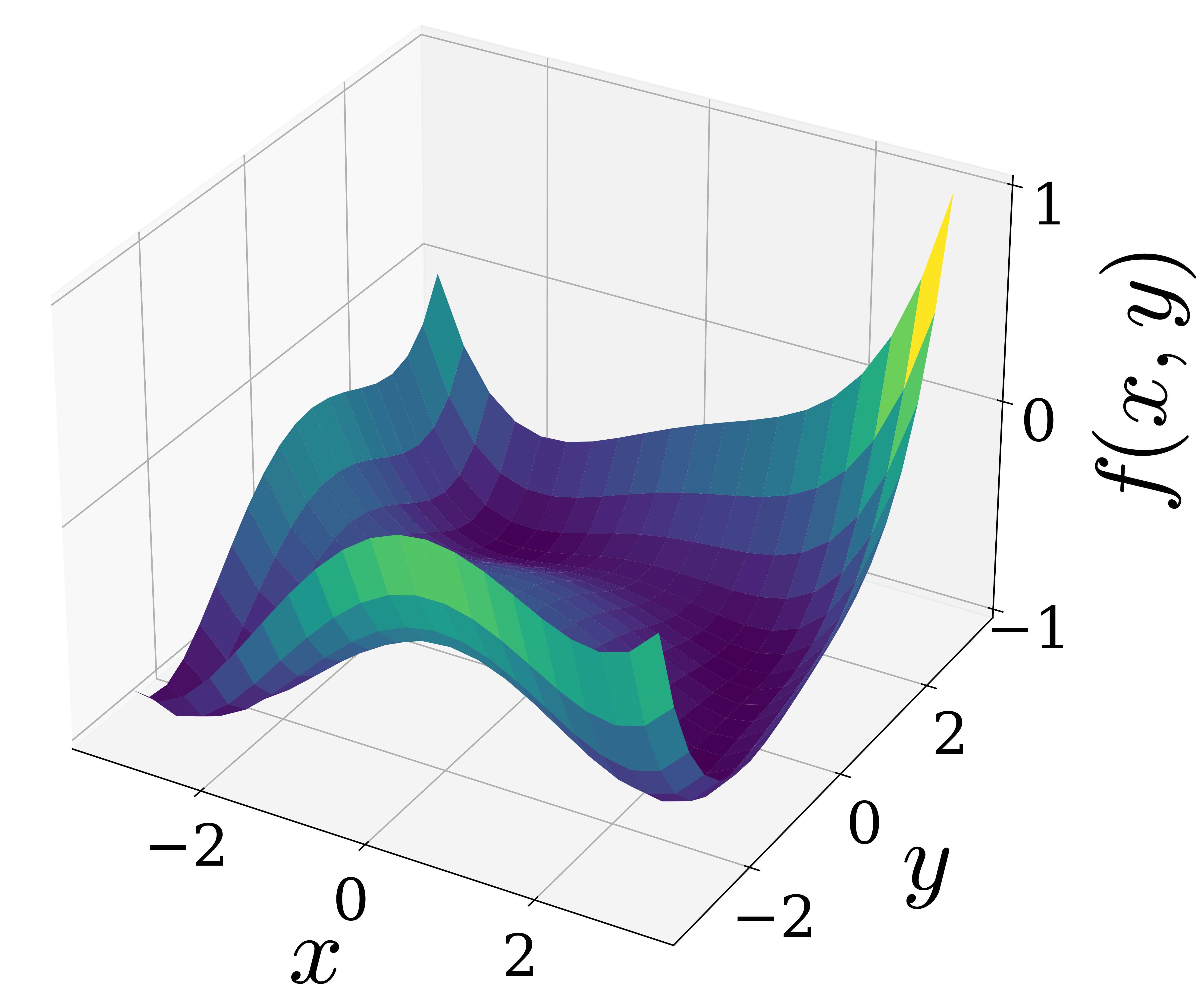}}}\hfill
\subfloat[Approximation result.]{\label{fig:Himmelblau_approx}{\includegraphics[width=0.32\textwidth]{./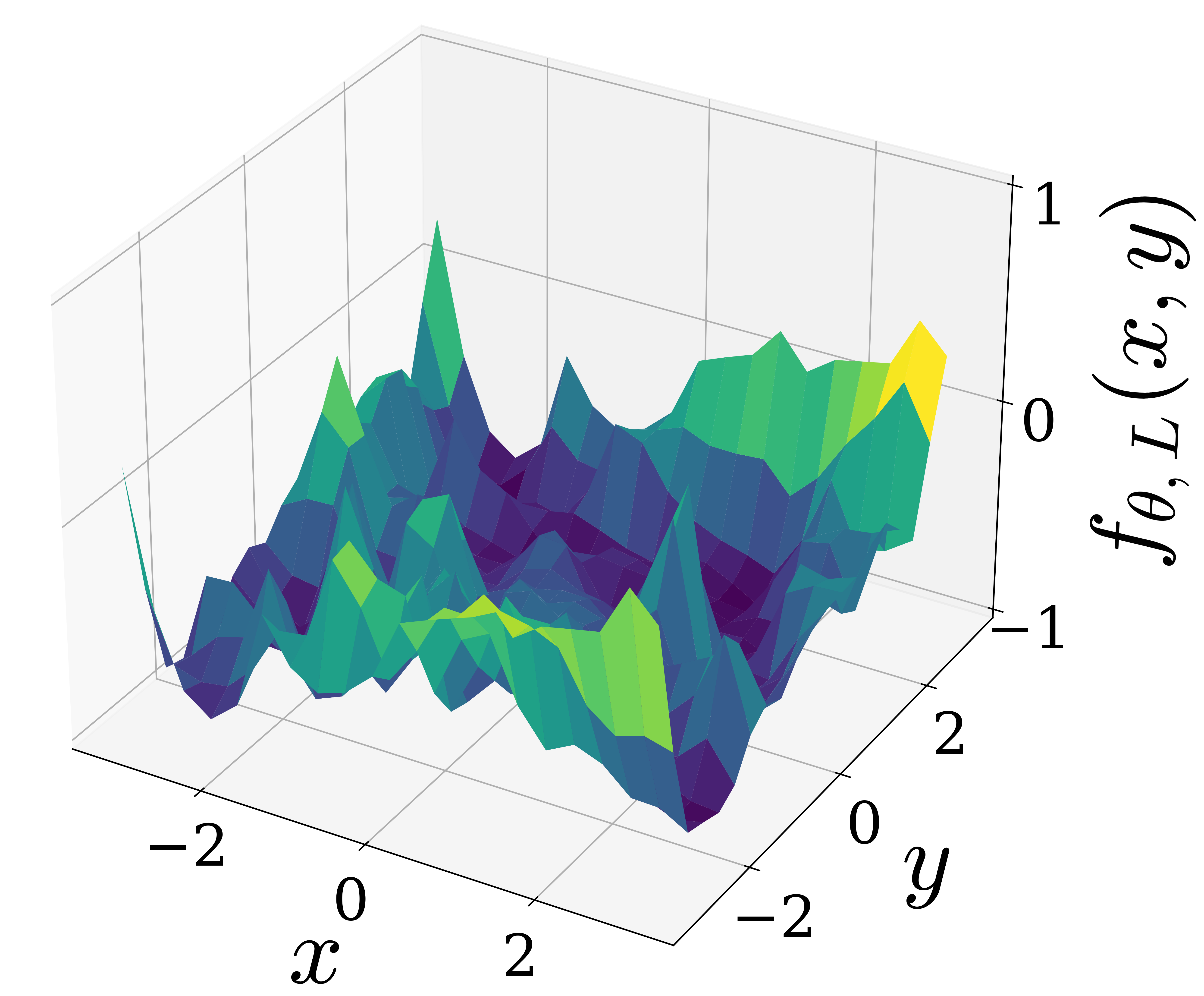}}}\hfill
\subfloat[Training loss.]{\label{fig:Himmelblau_loss}{\includegraphics[width=0.33\textwidth]{./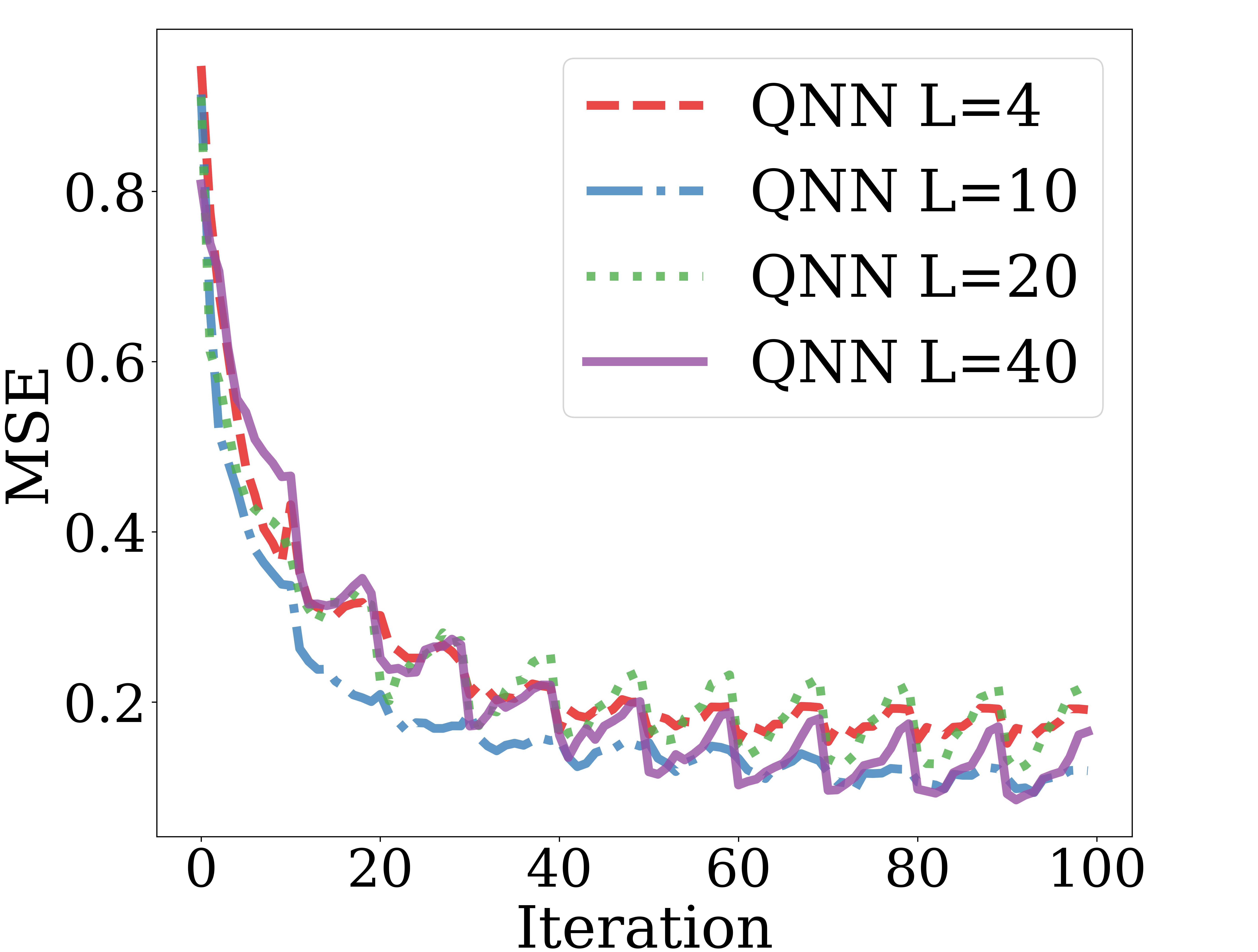}}}
\caption{Panel (a) is the plot of target function $f(x,y)$. Panel (b) shows the approximation result of a $40$-layer QNN.  Panel (c) is the plot of training loss for QNNs with different numbers of layers during the optimization. }
\label{fig:multivariate_function}
\end{figure}



\section{Conclusion and outlook}
In this work, we presented a systematic investigation of the expressive power of single-qubit native QNNs, which are capable to approximate any square-integrable univariate function with arbitrary precision. We not only give an existence proof but also analytically show an exact mapping between native QNNs and the partial Fourier series from perspectives of both frequency spectrum and Fourier coefficients, which solves an open problem on the UAP of single-qubit QNNs in Ref.~\cite{schuld2021effect}. Our proof, inspired by quantum signal processing, explicitly illustrates the correlation between parameters of trainable gates and the Fourier coefficients. Other than the expressivity, we also discuss the limitation of single-qubit QNNs from the perspective of multivariate Fourier series. Both the expressivity and limitation of single-qubit QNNs are validated by numerical simulations. We expect our results provide a fundamental framework to the class of data re-uploading QNNs, which serves as insightful guidance on the design of such QNN models.


Although the expressive power of a single-qubit QNN have been well investigated, it may not be an ideal model in practice due to the potential limitations on approximating multivariate functions. Moreover, single-qubit models can be efficiently simulated by classical computers and hence cannot bring any quantum advantage. Therefore one future step is to generalize the framework of single-qubit QNNs to multivariate function approximators. The multi-qubit QNNs with UAP as shown in Ref.~\cite{schuld2021effect} require exponential circuit depth, which is impractical to implement and also does not imply quantum advantage. It is of great interest to investigate the efficient implementation of QNNs with universal approximation properties for multivariate functions. A recent paper presents a method that extends quantum signal processing to multivariable~\cite{rossi2022multivariable}, which might be applicable to single-qubit QNNs.

\begin{ack}
We would like to thank Runyao Duan for helpful suggestions on quantum signal processing. We also thank Guangxi Li, Geng Liu, Youle Wang, Haokai Zhang, Lei Zhang, and Chengkai Zhu for useful comments. Z. Y., H. Y., and M. L.  contributed equally to this work. Part of this work was done when Z. Y., H. Y., and M. L. were research interns at Baidu Research.
\end{ack}

\bibliographystyle{unsrtnat}
\bibliography{smallbib}

\newpage

\appendix
\begin{center}
{\textbf{\large Appendix for Power and limitations of single-qubit native quantum neural networks }}
\end{center}

\section{Detailed Proofs}\label{sec:appendix}

\renewcommand{\thetheorem}{S\arabic{theorem}}
\renewcommand{\theequation}{S\arabic{equation}}
\renewcommand{\theproposition}{S\arabic{proposition}}
\renewcommand{\thelemma}{S\arabic{lemma}}
\renewcommand{\theremark}{S\arabic{remark}}
\renewcommand{\thecorollary}{S\arabic{corollary}}
\renewcommand{\thefigure}{S\arabic{figure}}
\setcounter{equation}{0}
\setcounter{table}{0}
\setcounter{proposition}{0}
\setcounter{lemma}{0}
\setcounter{corollary}{0}
\setcounter{figure}{0}
\setcounter{remark}{0}
\setcounter{figure}{0}

\renewcommand\thelemma{\ref{lem:qnn_yzy}}
\setcounter{lemma}{\arabic{lemma}-1}
\subsection{Proof of Lemma \ref{lem:qnn_yzy}}\label{appendix:lemma qnn_yzy}
\begin{lemma}
    There exist $\bm\theta = (\theta_0, \theta_1, \ldots, \theta_L)\in \RR^{L+1}$ such that
    \begin{equation}\label{eq:poly form QNNyzy}
        U_{\bm\theta,L}^{\YZY}(x) = \begin{bmatrix}
            P(x) & -Q(x)\\
            Q^*(x) & P^*(x)
        \end{bmatrix}
    \end{equation}
    if and only if real Laurent polynomials $P, Q \in \RR[e^{ix/2}, e^{-ix/2}]$ satisfy
    \begin{enumerate}
        \item $\deg(P) \leq L$ and $\deg(Q) \leq L$,
        \item $P$ and $Q$ have parity $L \bmod 2$,
        \item $\forall x\in \RR$, $\abs{P(x)}^2 + \abs{Q(x)}^2 = 1$.
    \end{enumerate}
\end{lemma}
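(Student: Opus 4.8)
The plan is to prove both directions of the equivalence, with the backward (reconstruction) direction being the substantive one, handled by induction on $L$.

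\medskip

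\textbf{Forward direction.} Assume $U_{\bm\theta,L}^{\YZY}(x)$ has the form of Eq.~\eqref{eq:QNNyzydecompose}. First I would check the case $L=0$: $U_{\bm\theta,0}^{\YZY}(x) = R_Y(\theta_0)$, whose entries are $\cos(\theta_0/2)$ and $\pm\sin(\theta_0/2)$ — constant (degree-$0$), real, parity-$0$ Laurent polynomials satisfying condition 3 trivially. For the inductive step, observe that $R_Z(x)$ is diagonal with entries $e^{\mp ix/2}$, so multiplying on the right by $R_Z(x)R_Y(\theta)$ maps a pair $(P,Q)$ of Laurent polynomials in $e^{\pm ix/2}$ to another such pair: the degree increases by at most $1$, the parity flips, the real-coefficient property is preserved (since $\cos(\theta/2),\sin(\theta/2)\in\RR$), and the unitarity of each factor forces $|P|^2+|Q|^2=1$ to persist. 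The anti-diagonal structure $\begin{bmatrix} P & -Q\\ Q^* & P^*\end{bmatrix}$ (an element of $SU(2)$ up to the fact that $R_Y,R_Z$ have determinant $1$) is preserved under multiplication by $R_Z(x)$ and $R_Y(\theta)$, which both have that same form. Iterating $L$ times gives conditions 1--3.

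\medskip

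\textbf{Backward direction (the main obstacle).} Given $P,Q$ satisfying 1--3, I must produce angles $\theta_0,\dots,\theta_L$. The argument is by induction on $L$; the base case $L=0$ is immediate since condition 3 gives $P^2+Q^2=1$ with $P,Q$ constant, so $P=\cos(\theta_0/2)$, $Q=\sin(\theta_0/2)$ for a suitable $\theta_0$. For $L\ge 1$, the key algebraic lemma is: there is a \emph{unique} $\theta_L\in[0,2\pi)$ such that the Laurent polynomials $\hat P,\hat Q$ defined by $U_{\bm\theta,L}^{\YZY}(x)\,R_Y^\dagger(\theta_L)R_Z^\dagger(x) = \begin{bmatrix}\hat P & -\hat Q\\ \hat Q^* & \hat P^*\end{bmatrix}$ have degree $\le L-1$ and parity $(L-1)\bmod 2$. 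Concretely, writing out the matrix product, $\hat P$ and $\hat Q$ are obtained from $P,Q$ by a linear combination with coefficients $\cos(\theta_L/2),\sin(\theta_L/2)$ followed by multiplication by $e^{\pm ix/2}$; the highest-degree terms $e^{\pm iLx/2}$ of $P$ and $Q$ must cancel. This gives two scalar equations on $\cos(\theta_L/2),\sin(\theta_L/2)$; the crux is to show they are consistent and pin down $\theta_L$ uniquely — this follows from condition 3 evaluated at the leading order, which forces the leading coefficient vectors of $P$ and $Q$ to be proportional in just the right way (comparing the $e^{iLx}$-coefficient in the expansion of $|P(x)|^2+|Q(x)|^2=1$ shows the top coefficients of $P$ and $Q$ satisfy $p_L^2 + q_L^2 = 0$ type relations forcing the cancellation to be achievable). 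Then one verifies $\hat P,\hat Q$ still satisfy conditions 1--3 for $L-1$ (unitarity is automatic since we multiplied by unitaries; degree drop and parity flip are by construction), apply the induction hypothesis to get $\theta_0,\dots,\theta_{L-1}$, and reassemble.

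\medskip

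The step I expect to be genuinely delicate is the uniqueness-and-existence of $\theta_L$ in the reduction: one must carefully track which coefficients of the leading monomials $e^{\pm iLx/2}$ appear in $\hat P$ versus $\hat Q$, use condition 3 to show that the system of cancellation equations has a solution, and confirm the solution is unique (so that the correspondence between parameter tuples and polynomial pairs is genuinely bijective, which is what makes the "exact correlation" claim meaningful). This mirrors the leading-order analysis in quantum signal processing, so I would follow that template: expand $|P(x)|^2+|Q(x)|^2\equiv 1$, extract the top Fourier mode, deduce the algebraic constraint on leading coefficients, and read off $\theta_L$ from it. The remaining verifications (parity, degree bound, real-coefficient preservation, unitarity of $\hat P,\hat Q$) are routine bookkeeping that I would dispatch quickly.
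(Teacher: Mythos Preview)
Your approach is essentially the paper's: both directions by induction on $L$, with the backward step peeling off one layer by choosing $\theta_L$ to cancel extremal coefficients and invoking condition~3 for consistency. One correction to the delicate step you flag: the top-mode constraint from $|P|^2+|Q|^2\equiv 1$ is \emph{not} of the form $p_L^2+q_L^2=0$ (which, over $\RR$, would force $p_L=q_L=0$ and trivialize the issue). Since $P=\sum p_n e^{inx/2}$ with real $p_n$ has conjugate $P^*=\sum p_n e^{-inx/2}$, the $e^{idx}$-coefficient of $|P|^2$ is $p_d\,p_{-d}$, so condition~3 yields $p_d p_{-d}+q_d q_{-d}=0$ with $d=\max(\deg P,\deg Q)$. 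This is exactly what is needed, because the layer-removal produces \emph{two} cancellation equations---one for the $e^{i(d+1)x/2}$-term of $\hat P$ (namely $\cos\tfrac{\theta_L}{2}p_d+\sin\tfrac{\theta_L}{2}q_d=0$) and one for the $e^{-i(d+1)x/2}$-term of $\hat Q$ (namely $-\sin\tfrac{\theta_L}{2}p_{-d}+\cos\tfrac{\theta_L}{2}q_{-d}=0$)---and the relation $p_d p_{-d}+q_d q_{-d}=0$ is precisely what makes them compatible. Also, uniqueness of $\theta_L$ can fail in degenerate cases (e.g.\ $p_d=q_d=0$), but only existence is required for the lemma as stated.
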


\begin{proof}
First we prove the forward direction, i.e.\  Eq.~\eqref{eq:poly form QNNyzy} implies the three conditions, by induction on $L$. For the base case, if $L=0$, then $R_Y(\theta_0)$ gives $P = \cos(\theta_0/2)$ and $Q = \sin(\theta_0/2)$, which clearly satisfies the conditions.

For the induction step, suppose Eq.~\eqref{eq:poly form QNNyzy} satisfies the three conditions. Then
\begin{align}
    & \begin{bmatrix}
            P & -Q\\
            Q^* & P^*
    \end{bmatrix}R_Z(x) R_Y(\theta_{k})\\
    &= \begin{bmatrix}
        P & -Q\\
        Q^* & P^*
    \end{bmatrix}
    \begin{bmatrix}
        \cos(\theta_{k}/2)e^{-ix/2} & -\sin(\theta_{k}/2)e^{-ix/2}\\
        \sin(\theta_{k}/2)e^{ix/2} & \cos(\theta_{k}/2)e^{ix/2}
    \end{bmatrix}\\
    &= \begin{bmatrix}
        \cos\frac{\theta_{k}}{2}e^{-ix/2} P - \sin\frac{\theta_{k}}{2}e^{ix/2} Q & - \sin\frac{\theta_{k}}{2}e^{-ix/2} P-\cos\frac{\theta_{k}}{2}e^{ix/2} Q \\
        \cos\frac{\theta_{k}}{2}e^{-ix/2} Q^* + \sin\frac{\theta_{k}}{2}e^{ix/2} P^* & \cos\frac{\theta_{k}}{2}e^{ix/2} P^* - \sin\frac{\theta_{k}}{2}e^{-ix/2} Q^*
    \end{bmatrix}\\
    &= \begin{bmatrix}
        \bar{P} & -\bar{Q}\\
        \bar{Q}^* & \bar{P}^*
    \end{bmatrix}
\end{align}
where
\begin{align}
    \bar{P} &= \cos\frac{\theta_{k}}{2}e^{-ix/2} P - \sin\frac{\theta_{k}}{2}e^{ix/2} Q,\\
    \bar{Q} &= \sin\frac{\theta_{k}}{2}e^{-ix/2} P + \cos\frac{\theta_{k}}{2}e^{ix/2} Q
\end{align}
clearly satisfy the first condition: $\deg(\bar{P}) \leq L+1$ and $\deg(\bar{Q}) \leq L+1$. They also satisfy the second condition since multiplying by $e^{-ix/2}$ or $e^{ix/2}$ alters the parity. The third condition follows from unitarity.

Next we show the reverse by induction on $L$ that the three conditions suffice to construct the QNN in Eq.~\eqref{eq:poly form QNNyzy}. For the base case, we have $L=0$ and $\deg(P) = \deg(Q) = 0$. Note that $P$ and $Q$ are real polynomials, i.e.\ the coefficients must be real numbers, so condition 3 implies that $P = \cos(\theta_0/2)$ and $Q = \sin(\theta_0/2)$ for some $\theta_0 \in \RR$. Thus the matrix
\begin{equation}
    \begin{bmatrix}
        P & -Q,\\
        Q^* & P^*
    \end{bmatrix}
\end{equation}
can be written as a QNN $U_{\bm\theta, 0}^{\YZY}(x) = R_Y(\theta_0)$.

For the induction step, suppose Laurent polynomials $P$ and $Q$ satisfy the three conditions for some $L>0$. We first observe that condition 3 implies
\begin{equation}\label{eq:YZY use condition 3}
    \abs{P}^2 + \abs{Q}^2 = PP^* + QQ^* = 1,
\end{equation}
where $PP^* + QQ^*$ is a Laurent polynomial of $e^{ix/2}$ with parity $2n \pmod 2 \equiv 0$. We denote $d \coloneqq \max(\deg(P), \deg(Q))$, then $d \leq L$ by the first condition. Since Eq.~\eqref{eq:YZY use condition 3} holds for all $x \in \RR$, the leading coefficients must satisfy $p_{d}p_{-d} + q_{d}q_{-d} = 0$ so they cancel. We choose $\theta_k \in \RR$ so that
\begin{align}
    \cos\frac{\theta_{k}}{2} p_d + \sin\frac{\theta_{k}}{2} q_d &= 0,\label{eq:cancelP}\\
    -\sin\frac{\theta_{k}}{2} p_{-d} + \cos\frac{\theta_{k}}{2} q_{-d} &= 0\label{eq:cancelQ}.
\end{align}
Now we briefly argue that there always exists a $\theta_k$ satisfies equations above. If all $p_d,p_{-d},q_{d}, q_{-d}$ are not $0$, simply let $\tan\frac{\theta_k}{2} = -\frac{p_d}{q_d}$. Then consider the case of zero coefficients. Since $d = \max(\deg(P), \deg(Q))$, at most two of $p_d,p_{-d},q_{d}, q_{-d}$ could be $0$. The fact $p_{d}p_{-d} + q_{d}q_{-d} = 0$ implies that one of $p_d, p_{-d}$ is $0$ if and only if one of $q_d, q_{-d}$ is $0$. If $p_d = q_d = 0$ (or $p_{-d} = q_{-d} = 0$), pick $\theta_k$ so that $\tan\frac{\theta_k}{2} = \frac{q_{-d}}{p_{-d}}$ (or $\tan\frac{\theta_k}{2} = -\frac{p_d}{q_d}$). If $p_d = q_{-d} = 0$ (or $p_{-d} = q_{d} = 0$), pick $\theta_k$ so that $\sin\frac{\theta_k}{2} = 0$ (or $\cos\frac{\theta_k}{2} = 0$). 

Next, for the $\theta_k$ that satisfies Eq.~\eqref{eq:cancelP} and Eq.~\eqref{eq:cancelQ} simultaneously, we consider
\begin{align}
    &\begin{bmatrix}
        P & -Q\\
        Q^* & P^*
    \end{bmatrix}R_Y^\dagger(\theta_k)R_Z^\dagger(x)\label{eq:reverseInduction}\\
    &= \begin{bmatrix}
        \cos\frac{\theta_{k}}{2}e^{ix/2} P + \sin\frac{\theta_{k}}{2}e^{ix/2} Q & -\cos\frac{\theta_{k}}{2}e^{-ix/2} Q + \sin\frac{\theta_{k}}{2}e^{-ix/2} P\\
        \cos\frac{\theta_{k}}{2}e^{ix/2} Q^* - \sin\frac{\theta_{k}}{2}e^{ix/2} P^* & \cos\frac{\theta_{k}}{2}e^{-ix/2} P^* + \sin\frac{\theta_{k}}{2}e^{-ix/2} Q
        \end{bmatrix}\\
    &=\begin{bmatrix}
        \hat{P} & -\hat{Q}\\
        \hat{Q}^* & \hat{P}^*
    \end{bmatrix}\label{eq:matrixHat1}
\end{align}
where
\begin{align}
    \hat{P} &= \cos\frac{\theta_{k}}{2}e^{ix/2} P + \sin\frac{\theta_{k}}{2}e^{ix/2} Q,\\
    \hat{Q} &= \cos\frac{\theta_{k}}{2}e^{-ix/2} Q - \sin\frac{\theta_{k}}{2}e^{-ix/2} P.
\end{align}
Since $P,Q \in \RR[e^{ix/2}, e^{-ix/2}]$ are Laurent polynomials with degree at most $d$, $\hat{P} \in \RR[e^{ix/2}, e^{-ix/2}]$ might appear to be a Laurent polynomial with degree $d+1$. In fact it has degree $\deg(\hat{P}) \leq d-1$, because the coefficient of $(e^{ix/2})^{d+1}$ term in $\hat{P}$ is $\cos\frac{\theta_k}{2} p_d + \sin\frac{\theta_k}{2} q_d = 0$ by the choice of $\theta_k$, and the coefficient of $(e^{ix/2})^{d}$ term is 0 by condition 2. Similarly, $\hat{Q}$ has leading coefficient $\cos\frac{\theta_k}{2} q_{-d} - \sin\frac{\theta_k}{2} p_{-d} = 0$ and has degree $\deg(\hat{Q}) \leq d-1$. Since $d \leq L$, we have $\deg(\hat{P}) \leq L-1$ and $\deg(\hat{Q}) \leq L-1$, hence condition 1 is satisfied. From the parity of $P$ and $Q$, it is easy to see that $\hat{P}$ and $\hat{Q}$ have parity $L-1 \bmod 2$, thus condition 2 is satisfied. Condition 3 follows from unitarity. By the induction hypothesis, Eq.~\eqref{eq:matrixHat1} can be written as a QNN in the form of $U_{\bm\theta,L-1}^{\YZY}(x)$, and therefore the matrix
\begin{equation}
    \begin{bmatrix}
        P & -Q\\
        Q^* & P^*
    \end{bmatrix}
\end{equation}
can be written as a QNN in the form of $U_{\bm\theta,L}^{\YZY}(x)$.
\end{proof}

We note that the above proof is in a similar spirit of the proof of quantum signal processing~\cite{low2017optimal, haah2019product, gilyen2019quantum, chao2020findinga}. 

\subsection{Proof of Proposition~\ref{prop:QNN_yzy_function_ket0}}\label{appendix:QNN_yzy_function_ket0}
\renewcommand\theproposition{\ref{prop:QNN_yzy_function_ket0}}
\setcounter{proposition}{\arabic{proposition}-1}
\begin{proposition}
For any even square-integrable function $f:[-\pi, \pi] \to [-1, 1]$ and for all $\epsilon > 0$, there exists a QNN $U_{\bm{\theta},L}^{\YZY}(x)$ such that $\ket{\psi(x)} = U_{\bm{\theta},L}^{\YZY}(x)\ket{0}$ satisfies
    \begin{equation}\label{eq:yzy_even_function_}
        \norm{\expval{Z}{\psi(x)}-f(x)}\leq \epsilon.
    \end{equation}
\end{proposition}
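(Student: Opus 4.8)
The plan is to reduce the statement to the approximation of $f$ by a truncated cosine series and then invoke Lemma~\ref{lem:qnn_yzy} to realize that series as a $Z$-expectation of the $\YZY$ QNN. First I would recall that, since $f$ is even and square-integrable on $[-\pi,\pi]$, its Fourier series contains only cosine terms: $f(x) = \sum_{n=0}^\infty a_n \cos(nx)$, and the partial sums $f_N(x) = \sum_{n=0}^N a_n\cos(nx)$ converge to $f$ in $L^2$. Because we ultimately need a uniform ($L^\infty$) bound in Eq.~\eqref{eq:yzy_even_function_}, I would either assume $f$ is continuous (or of bounded variation, so that a classical Fourier convergence theorem such as Fej\'er's theorem applies — using Ces\`aro means $\sigma_N$ in place of $f_N$ guarantees uniform convergence and keeps $\|\sigma_N\|_\infty \le \|f\|_\infty \le 1$), or invoke the Weierstrass-type density of trigonometric polynomials in the relevant norm. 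Either way, I obtain an even real trigonometric polynomial $g(x) = \sum_{n=0}^N b_n\cos(nx)$ with $\|g - f\|_\infty \le \epsilon/2$ and, after a slight rescaling by a factor $(1-\delta)$ if necessary, $\|g\|_\infty \le 1$.

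Next I would translate $g$ into the Laurent-polynomial language of Lemma~\ref{lem:qnn_yzy}. Observe that $\expval{Z}{\psi(x)} = \bra{0}U^\dagger Z U\ket{0} = |P(x)|^2 - |Q(x)|^2 = 2|P(x)|^2 - 1$, using condition 3 of the lemma. So I need a real Laurent polynomial $P \in \RR[e^{ix/2},e^{-ix/2}]$ with $\deg(P) \le L$, parity $L\bmod 2$, and $|P(x)| \le 1$, such that $2|P(x)|^2 - 1 \approx g(x)$, i.e. $|P(x)|^2 = (1+g(x))/2 =: h(x)$. Note $h$ is an even real trigonometric polynomial of degree $N$ with $0 \le h(x) \le 1$. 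By the Fej\'er–Riesz lemma (in its trigonometric form), any nonnegative trigonometric polynomial $h$ of degree $N$ can be written as $h(x) = |R(e^{ix})|^2$ for some ordinary polynomial $R$ of degree $N$ in $e^{ix}$; rewriting in the variable $e^{ix/2}$ gives a Laurent polynomial $P$ of degree at most $N$ (in units of $e^{ix/2}$ this is degree $2N$, so I would set $L = 2N$ or adjust the parity as needed — here $L$ even matches the even parity of the cosine polynomial $P$, consistent with condition 2). Since $h \le 1$ we get $|P(x)| \le 1$ everywhere, so by the achievability result quoted after Lemma~\ref{lem:qnn_yzy} there exists a companion $Q$ with $|P|^2 + |Q|^2 = 1$, and hence the full set of parameters $\bm\theta$ realizing $U_{\bm\theta,L}^{\YZY}$.

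With those pieces in hand the conclusion is immediate: $\|\expval{Z}{\psi(x)} - f(x)\|_\infty \le \|2|P(x)|^2 - 1 - g(x)\|_\infty + \|g - f\|_\infty = 0 + \epsilon/2 \le \epsilon$, where the first term vanishes by construction and the second is the polynomial-approximation bound (the rescaling factor contributes an additional $O(\delta)$ that I fold into $\epsilon/2$ by choosing $\delta$ small). I expect the main obstacle to be the careful handling of parity and degree bookkeeping between the $\cos(nx)$ representation, the variable $z = e^{ix}$, and the half-angle variable $e^{ix/2}$ demanded by Lemma~\ref{lem:qnn_yzy}: one must check that an even cosine polynomial of degree $N$ corresponds to a Laurent polynomial of even parity and degree $2N$ in $e^{ix/2}$, so that taking $L = 2N$ satisfies conditions 1 and 2 simultaneously. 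A secondary subtlety is justifying uniform rather than merely $L^2$ convergence for a general square-integrable $f$; the cleanest fix is to approximate first in $L^2$ and then note that the theorem's norm should be read accordingly, or to restrict to continuous $f$ and use Fej\'er's theorem, which also conveniently preserves the sup-norm bound needed to keep $|P| \le 1$.
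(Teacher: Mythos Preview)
Your approach is correct in outline but takes a genuinely different route from the paper, and it has one gap you should patch.

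\textbf{How the paper does it.} The paper performs a \emph{double} Fourier approximation: first it truncates $f$ to a cosine polynomial $f_K$ with $\|f_K-f\|\le\epsilon/2$, and then it approximates the function $\sqrt{(1+f_K)/2}$ by its own cosine partial sum $g_L$ with $\|g_L-\sqrt{(1+f_K)/2}\|\le\epsilon/8$. Setting $P=g_L$ and invoking Lemma~\ref{lem:qnn_yzy} gives $\expval{Z}{\psi(x)}=2g_L(x)^2-1$, and the elementary bound $|a^2-b^2|\le 2|a-b|$ (for $|a|,|b|\le 1$) converts the $\epsilon/8$ into $\epsilon/2$. There is no spectral factorization anywhere.

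\textbf{What you do differently.} You approximate $f$ once by $g$ and then use Fej\'er--Riesz to factor $h=(1+g)/2$ \emph{exactly} as $|P|^2$. This is cleaner: the second approximation layer disappears, and you do not need the $|a^2-b^2|\le 2|a-b|$ trick. It also sidesteps a soft spot in the paper's argument, namely ensuring that the partial sum $g_L$ of $\sqrt{(1+f_K)/2}$ stays bounded by $1$ in sup norm.

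\textbf{The gap.} Lemma~\ref{lem:qnn_yzy} requires $P\in\RR[e^{ix/2},e^{-ix/2}]$, i.e.\ \emph{real} coefficients. The standard Fej\'er--Riesz theorem only guarantees $h(x)=|R(e^{ix})|^2$ for some $R$ with possibly complex coefficients; for a generic (non-even) nonnegative trigonometric polynomial such as $1+\sin x$ no real $R$ exists. You must argue that because $h$ is \emph{even}, a real-coefficient spectral factor can be chosen. This is true: evenness makes $w^N h(w)$ a palindromic polynomial with real coefficients, so its roots are closed under both $\alpha\mapsto\bar\alpha$ and $\alpha\mapsto 1/\alpha$, and one can select the Fej\'er--Riesz half of the roots in a conjugation-closed way. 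But you need to say this explicitly; as written, the step ``rewriting in the variable $e^{ix/2}$ gives a Laurent polynomial $P$'' silently assumes what has to be checked before Lemma~\ref{lem:qnn_yzy} can be invoked.

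Your remarks on degree/parity bookkeeping ($L=2N$, even parity) and on the $L^2$ versus uniform convergence issue are on point and apply equally to the paper's version.
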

\begin{proof}
First, it is known that any even square-integrable function $f(x)$ can be approximated by a truncated Fourier series~\cite{weisz2012summability}. For any $\epsilon > 0$, there exists a $K\in \NN^+$ such that
\begin{align}\label{neq:truncated_from1}
    \norm{f_K(x)-f(x)} \leq \frac{\epsilon}{2},
\end{align}
where $f_K(x)$ is the $K$-term partial Fourier series of even function $f(x)$. Consider the function $\sqrt{\frac{1+f_K(x)}{2}}$, using the property of square-integrable functions again, there exists an $L\in\NN^+$ such that
\begin{align}
    \norm{g_L(x)-\sqrt{\frac{1+f_K(x)}{2}}} \leq \frac{\epsilon}{8},
\end{align}
where
\begin{align}
    g_L(x)= a_0 + \sum_{n=1}^{L/2}a_n\cos{(nx)}
\end{align}
is the partial Fourier series of the even function $\sqrt{\frac{1+f_K(x)}{2}}$ and $L \pmod 2 \equiv 0$. Next we need to prove
\begin{align}\label{eq:yzy_even_function_ket0 1}
    \expval{U_{\bm{\theta},L}^{\YZY}(x)}{0}=g_L(x).
\end{align}
Specifically, $g_L(x)$ can be written as follows,
\begin{align}
    g_L(x)=\sum_{n=-L/2}^{L/2}c_ne^{inx}=\sum_{n=-L/2}^{L/2}c_n(e^{ix/2})^{2n},
\end{align}
where $c_n=c_{-n}=a_n/2$. Obviously, $g_L \in \RR[e^{ix/2}, e^{-ix/2}]$ and conditions of Lemma~\ref{lem:qnn_yzy} are satisfied, so Eq.~\eqref{eq:yzy_even_function_ket0 1} holds. Further, we have
\begin{align}
    \norm{\expval{Z}{\psi(x)}-f_K(x)} &= 2\norm{g^2_L(x)-\frac{1+f_K(x)}{2}},\\
    &\leq 4\norm{g_L(x)-\sqrt{\frac{1+f_K(x)}{2}}},\\
    &\leq \frac{\epsilon}{2},
\end{align}
where $\norm{g_L(x)}\leq 1$ and $\norm{\sqrt{\frac{1+f_K(x)}{2}}}\leq1$.
Finally, combined with inequality~\eqref{neq:truncated_from1}, inequality~\eqref{eq:yzy_even_function_} holds.
\end{proof}

\subsection{Proof of Lemma \ref{lem:qnn_yzzyz}}\label{appendix:qnn_yzzyz}
\renewcommand\thelemma{\ref{lem:qnn_yzzyz}}
\setcounter{lemma}{\arabic{lemma}-1}
\begin{lemma}
There exist $\bm\theta = (\theta_0, \theta_1, \ldots, \theta_L)\in \RR^{L+1}$ and $\bm\phi = (\varphi, \phi_0, \phi_1, \ldots, \phi_L)\in \RR^{L+2}$ such that
\begin{equation}\label{eq:poly form QNNwzw}
    U_{\bm\theta,\bm\phi,L}^{\WZW}(x) = \begin{bmatrix}
        P(x) & -Q(x)\\
        Q^*(x) & P^*(x)
    \end{bmatrix}
\end{equation}
if and only if Laurent polynomials $P, Q \in \CC[e^{ix/2}, e^{-ix/2}]$ satisfy
\begin{enumerate}
    \item $\deg(P) \leq L$ and $\deg(Q) \leq L$,
    \item $P$ and $Q$ have parity $L \bmod 2$,
    \item $\forall x\in \RR$, $\abs{P(x)}^2 + \abs{Q(x)}^2 = 1$.
\end{enumerate}
\end{lemma}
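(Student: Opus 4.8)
The plan is to mimic the structure of the proof of Lemma~\ref{lem:qnn_yzy}, proving both directions by induction on $L$, while carrying along the extra $R_Z$ phase gates that upgrade the real coefficients to complex ones. For the forward direction, I would start from the base case $L=0$: the block $R_Z(\varphi)W(\theta_0,\phi_0)R_Z(\varphi)$... actually, more carefully, $R_Z(\varphi)W(\theta_0,\phi_0)$ expands to a matrix of the claimed form with $P = e^{-i(\varphi+\phi_0)/2}\cos(\theta_0/2)$ and $Q = e^{-i(\varphi-\phi_0)/2}\sin(\theta_0/2)$, constants (degree $0$, parity $0$) with $|P|^2+|Q|^2=1$. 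For the induction step, I would multiply an $L$-layer circuit satisfying the three conditions on the right by one more encoding-plus-trainable block $R_Z(x)W(\theta_k,\phi_k)$ and verify by direct matrix multiplication that the new entries $\bar P,\bar Q$ are Laurent polynomials of degree at most $L+1$, that multiplication by $e^{\pm ix/2}$ flips parity from $L\bmod 2$ to $L+1\bmod 2$, and that the third condition is automatic from unitarity of the product of unitaries.

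For the backward (converse) direction, which is the substantive part, I would again induct on $L$. The base case $L=0$ uses that $P,Q$ are now \emph{complex} constants with $|P|^2+|Q|^2=1$; writing $P = e^{i\alpha}\cos(\theta_0/2)$ and $Q = e^{i\beta}\sin(\theta_0/2)$ and then solving $\alpha = -(\varphi+\phi_0)/2$, $\beta = -(\varphi-\phi_0)/2$ for $\varphi,\phi_0$ recovers a valid $R_Z(\varphi)W(\theta_0,\phi_0)$. For the induction step with $L>0$, I would take $P,Q\in\CC[e^{ix/2},e^{-ix/2}]$ satisfying conditions 1--3, set $d=\max(\deg P,\deg Q)\le L$, and use condition~3 to extract a relation among the extreme coefficients $p_d,p_{-d},q_d,q_{-d}$ (namely $p_d p_{-d}^* + q_d q_{-d}^* = 0$, from the vanishing of the degree-$2d$ coefficient of $|P|^2+|Q|^2$). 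Then I would choose $\theta_k,\phi_k$ (and in the first step also $\varphi$) so that right-multiplying by $W^\dagger(\theta_k,\phi_k)R_Z^\dagger(x)$ produces new polynomials $\hat P,\hat Q$ whose degree-$(d+1)$ \emph{and} degree-$d$ coefficients both vanish — the former by the choice of angles killing the leading terms, the latter by the parity condition — so that $\deg\hat P,\deg\hat Q\le d-1\le L-1$ with parity $L-1\bmod 2$. The extra rotation angle $\phi_k$ gives exactly the degree of freedom needed to zero out a \emph{complex} leading coefficient (two real constraints), which is why the real-coefficient argument of Lemma~\ref{lem:qnn_yzy} generalizes. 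Unitarity again handles condition~3, and the induction hypothesis finishes the step.

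The main obstacle I anticipate is the bookkeeping around the angle selection in the converse: with complex $P,Q$, zeroing the leading coefficients of $\hat P$ and $\hat Q$ simultaneously is a genuinely two-real-parameter problem per layer, and one must check that the $W(\theta_k,\phi_k)=R_Y(\theta_k)R_Z(\phi_k)$ parametrization is rich enough to always solve it, including the degenerate cases where some of $p_{\pm d},q_{\pm d}$ vanish (as in the original proof's case analysis). I would argue that the relation $p_d p_{-d}^* + q_d q_{-d}^* = 0$ plus $d = \max(\deg P,\deg Q)$ restricts which coefficients can vanish, and then handle the residual cases by setting $\sin(\theta_k/2)=0$ or $\cos(\theta_k/2)=0$ and absorbing remaining phases into $\phi_k$. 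The leftover single phase at the end of the induction — when we reach $L=0$ and still have a free global-type phase — is exactly what the extra $R_Z(\varphi)$ at the front of $U_{\bm\theta,\bm\phi,L}^{\WZW}(x)$ in Eq.~\eqref{eq:QNNyzzyzdecompose} is there to absorb, so no additional structure is needed. A detailed proof will be given in Appendix~\ref{appendix:qnn_yzzyz}.
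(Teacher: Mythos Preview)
Your proposal is correct and follows essentially the same approach as the paper's own proof: induction on $L$ in both directions, with the forward step a direct matrix computation and the converse step peeling off one $W^\dagger(\theta_k,\phi_k)R_Z^\dagger(x)$ block by choosing $\theta_k,\phi_k$ to annihilate the complex leading coefficients (using $p_d p_{-d}^* + q_d q_{-d}^* = 0$ and a case analysis for degenerate coefficients), with the extra $R_Z(\varphi)$ absorbed at the $L=0$ base case. The only cosmetic difference is that the paper does not invoke $\varphi$ during the inductive peeling at all---it is used solely in the base case---so your parenthetical ``(and in the first step also $\varphi$)'' should be dropped.
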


\begin{proof}
The entire proof is similar as the proof of Lemma~\ref{lem:qnn_yzy}, and the only difference is that the coefficients of Laurent polynomials could be complex rather than real. First we prove the forward direction by induction on $L$. For the base case, if $L=0$, then $U_{\bm\theta,\bm\phi,L}^{\WZW}(x) = R_Z(\varphi) R_Y(\theta_0)R_Z(\phi_0)$ gives $P = e^{-i(\varphi + \phi_0)/2}\cos(\theta_0/2)$ and $Q =e^{-i(\varphi - \phi_0)/2} \sin(\theta_0/2)$, which clearly satisfies the three conditions.

For the induction step, suppose Eq.~\eqref{eq:poly form QNNwzw} satisfies the three conditions. Then
\begin{align}
    & \begin{bmatrix}
            P & -Q\\
            Q^* & P^*
    \end{bmatrix}R_Z(x) R_Y(\theta_{k})R_Z(\phi_{k})\\
    &= \begin{bmatrix}
        P & -Q\\
        Q^* & P^*
    \end{bmatrix}
    \begin{bmatrix}
        \cos\frac{\theta_k}{2}e^{-i\frac{\phi_k}{2}}e^{-i\frac{x}{2}} & -\sin\frac{\theta_k}{2}e^{i\frac{\phi_k}{2}}e^{-i\frac{x}{2}}\\
        \sin\frac{\theta_k}{2}e^{-i\frac{\phi_k}{2}}e^{i\frac{x}{2}} & \cos\frac{\theta_k}{2}e^{i\frac{\phi_k}{2}}e^{i\frac{x}{2}}
    \end{bmatrix}\\
    &= \begin{bmatrix}
        \cos\frac{\theta_{k}}{2}e^{-i\frac{\phi_k}{2}}e^{-i\frac{x}{2}} P - \sin\frac{\theta_{k}}{2}e^{-i\frac{\phi_k}{2}}e^{i\frac{x}{2}} Q & - \sin\frac{\theta_{k}}{2}e^{i\frac{\phi_k}{2}}e^{-i\frac{x}{2}} P-\cos\frac{\theta_{k}}{2}e^{i\frac{\phi_k}{2}}e^{i\frac{x}{2}} Q \\
        \cos\frac{\theta_{k}}{2}e^{-i\frac{\phi_k}{2}}e^{-i\frac{x}{2}} Q^* + \sin\frac{\theta_{k}}{2}e^{-i\frac{\phi_k}{2}}e^{i\frac{x}{2}} P^* & \cos\frac{\theta_{k}}{2}e^{i\frac{\phi_k}{2}}e^{i\frac{x}{2}} P^* - \sin\frac{\theta_{k}}{2}e^{i\frac{\phi_k}{2}}e^{-i\frac{x}{2}} Q^*
    \end{bmatrix}\\
    &= \begin{bmatrix}
        \bar{P} & -\bar{Q}\\
        \bar{Q}^* & \bar{P}^*
    \end{bmatrix}
\end{align}
where
\begin{align}
    \bar{P} &= \cos\frac{\theta_{k}}{2}e^{-i\frac{\phi_k}{2}}e^{-i\frac{x}{2}} P - \sin\frac{\theta_{k}}{2}e^{-i\frac{\phi_k}{2}}e^{i\frac{x}{2}} Q,\\
    \bar{Q} &=  \sin\frac{\theta_{k}}{2}e^{i\frac{\phi_k}{2}}e^{-i\frac{x}{2}} P + \cos\frac{\theta_{k}}{2}e^{i\frac{\phi_k}{2}}e^{i\frac{x}{2}} Q
\end{align}
clearly satisfy the first condition: $\deg(\bar{P}) \leq L+1$ and $\deg(\bar{Q}) \leq L+1$. They also satisfy the second condition since multiplying by $e^{-ix/2}$ or $e^{ix/2}$ alters the parity. The third condition is satisfied because of unitarity.

Next we show the backward direction by induction on $L$ that the three conditions suffice to construct the QNN in Eq.~\eqref{eq:poly form QNNwzw}. First we consider the base case of $L=0$, we have $\deg(P)=\deg(Q)=0$, the above condition implies that $P = e^{-i(\varphi + \phi_0)/2} \cos(\theta_0/2)$ and $Q = e^{-i(\varphi - \phi_0)/2} \sin(\theta_0/2)$ for some $\varphi, \theta_0,\phi_0\in\RR$. Thus the matrix
\begin{equation}
    \begin{bmatrix}
        P & -Q,\\
        Q^* & P^*
    \end{bmatrix}
\end{equation}
can be written as a QNN $U_{\bm\theta,\bm\phi,L}^{\WZW}(x) = R_Z(\varphi)R_Y(\theta_0)R_Z(\phi_0)$.
    
For the induction step, suppose Laurent polynomials $P$ and $Q$ satisfy the three conditions for some $L>0$. We first observe that condition 3 implies
\begin{equation}\label{eq:use condition 3}
    \abs{P}^2 + \abs{Q}^2 = PP^* + QQ^* = 1,
\end{equation}
where $PP^* + QQ^*$ is a Laurent polynomial of $e^{ix/2}$ with parity $2n \pmod 2 \equiv 0$.
Since Eq.~\eqref{eq:use condition 3} holds for all $x \in \RR$, the leading coefficients must satisfy $p_{d}p_{-d}^* + q_{d}q_{-d}^* = 0$ and $p_{d}^*p_{-d} + q_{d}^*q_{-d} = 0$, where $d \coloneqq \max(\deg(P), \deg(Q))$. By the first condition, we have $d \leq L$. We choose $\theta_k,\phi_k\in\mathbb{R}$ so that
\begin{align}
    \cos{\frac{\theta_k}{2}}e^{i\frac{\phi_k}{2}}p_{d}+\sin{\frac{\theta_k}{2}}e^{-i\frac{\phi_k}{2}}q_d &= 0,\\
    -\sin{\frac{\theta_k}{2}}e^{i\frac{\phi_k}{2}}p_{-d}+\cos{\frac{\theta_k}{2}}e^{-i\frac{\phi_k}{2}}q_{-d} &= 0.
\end{align}
Now we briefly argue that there always exists $\theta_k$ and $\phi_k$ satisfy equations above. If all $p_d, p_{-d}, q_d, q_{-d}$ are not zero, simply let $\tan(\frac{\theta_k}{2})e^{-i\phi_k}=-\frac{p_d}{q_d}$. Then consider the case of zero coefficients. Since $\deg(P)=\deg(Q)=d$, at most two of $p_d, p_{-d}, q_d, q_{-d}$ could be 0. The fact $p_{d}p_{-d}^* + q_{d}q_{-d}^* = 0$ implies that one of one of $p_d, p_{-d}$ is 0 if and only if one of $q_d,q_{-d}$ is 0. If $p_d=q_d=0$ (or $ p_{-d}=q_{-d}=0$), pick $\theta_k$ and $\phi_k$ so that $\tan{(\frac{\theta_k}{2})}e^{i\phi_k} = \frac{q_{-d}}{p_{-d}}$. If $p_d = q_{-d}=0$ or ($p_{-d}=q_d=0$), pick $\theta_k$ so that $\sin{(\frac{\theta_k}{2})}=0$ (or $\cos{(\frac{\theta_k}{2})}=0$).

Next, for the $\theta_k$ that we pick, consider
\begin{align}
    &\begin{bmatrix}
        P & -Q\\
        Q^* & P^*
    \end{bmatrix}W^\dagger(\theta_k,\phi_k)R_Z^\dagger(x)\\
    &= \begin{bmatrix}
        e^{i\frac{\phi_k}{2}}\cos\frac{\theta_k}{2}e^{i\frac{x}{2}} P + e^{-i\frac{\phi_k}{2}}\sin\frac{\theta_k}{2}e^{i\frac{x}{2}} Q & e^{i\frac{\phi_k}{2}}\sin\frac{\theta_k}{2}e^{-i\frac{x}{2}} P-e^{-i\frac{\phi_k}{2}}\cos\frac{\theta_k}{2}e^{-i\frac{x}{2}} Q\\
        e^{i\frac{\phi_k}{2}}\cos\frac{\theta_k}{2}e^{i\frac{x}{2}} Q^* - e^{-i\frac{\phi_k}{2}}\sin\frac{\theta_k}{2}e^{i\frac{x}{2}} P^* & e^{-i\frac{\phi_k}{2}}\cos\frac{\theta_k}{2}e^{-i\frac{x}{2}} P^* + e^{i\frac{\phi_k}{2}}\sin\frac{\theta_k}{2}e^{-i\frac{x}{2}} Q^*
        \end{bmatrix}\label{eq:matrixInduct}\\
    &=\begin{bmatrix}
        \hat{P} & -\hat{Q}\\
        \hat{Q}^* & \hat{P}^*
    \end{bmatrix}\label{eq:matrixHat}
\end{align}
where
\begin{align}
    \hat{P} &= e^{i\frac{\phi_k}{2}}\cos\frac{\theta_k}{2}e^{i\frac{x}{2}} P + e^{-i\frac{\phi_k}{2}}\sin\frac{\theta_k}{2}e^{i\frac{x}{2}} Q,\\
    \hat{Q} &= e^{-i\frac{\phi_k}{2}}\cos\frac{\theta_k}{2}e^{-i\frac{x}{2}} Q - e^{i\frac{\phi_k}{2}}\sin\frac{\theta_k}{2}e^{-i\frac{x}{2}} P.
\end{align}
Since $P,Q \in \CC[e^{ix/2}, e^{-ix/2}]$ are Laurent polynomials with degree at most $d$, $\hat{P} \in \RR[e^{ix/2}, e^{-ix/2}]$ might appear to be a Laurent polynomial with degree at most $d+1$. In fact it has degree $\deg(\hat{P})\leq d-1$, because the coefficient of $(e^{i\frac{x}{2}})^{d+1}$ term in $\hat{P}$ is $\cos{\frac{\theta_k}{2}}e^{i\frac{\phi_k}{2}}p_{d}+\sin{\frac{\theta_k}{2}}e^{-i\frac{\phi_k}{2}}q_d=0$ by the selected $\theta_k$ and $\phi_k$, and the coefficients of $(e^{\frac{ix}{2}})^d$ term is 0 by condition 2. Similarly, $\hat{Q}$ has leading coefficient $-\sin{\frac{\theta_k}{2}}e^{i\frac{\phi_k}{2}}p_{-d}+\cos{\frac{\theta_k}{2}}e^{-i\frac{\phi_k}{2}}q_{-d}=0$ and has degree $\deg(\hat{Q})\leq d-1$. Since $d \leq L$, we have $\deg(\hat{P}) \leq L-1$ and $\deg(\hat{Q}) \leq L-1$, hence condition 1 is satisfied. From the parity of $P$ and $Q$, it is easy to see that $\hat{P}$ and $\hat{Q}$ have parity $L-1 \bmod{2}$, thus condition 2 is satisfied. Condition 3 follows from unitarity. Thus by the induction hypothesis, Eq.~\eqref{eq:matrixHat} can be written as a QNN in the form of $U_{\bm\theta,\bm\phi,L-1}^{\WZW}(x)$, and therefore the matrix
\begin{equation}
    \begin{bmatrix}
        P & -Q\\
        Q^* & P^*
    \end{bmatrix}
\end{equation}
can be written as a QNN in the form of $U_{\bm\theta,\bm\phi,L}^{\WZW}(x)$.
\end{proof}

\subsection{Proof of Theorem~\ref{thm:QNN_yzzyz_function_Z_measurement}}\label{appendix:QNN_yzzyz_function_Z_measurement}
\renewcommand\thetheorem{\ref{thm:QNN_yzzyz_function_Z_measurement}}
\setcounter{theorem}{\arabic{theorem}-1}
\begin{theorem}[Univariate approximation properties of single-qubit QNNs.]
For any univariate square-integrable function $f:[-\pi, \pi] \to [-1, 1]$ and for all $\epsilon > 0$, there exists a QNN $U_{\bm{\theta},\bm\phi,L}^{\WZW}(x)$ such that $\ket{\psi(x)} = U_{\bm{\theta},\bm\phi,L}^{\WZW}(x)\ket{0}$ satisfies
    \begin{equation}\label{eq:yzzyz_function_}
        \norm{\expval{Z}{\psi(x)} - f(x)} \leq \epsilon.
    \end{equation}
\end{theorem}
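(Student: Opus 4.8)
The plan is to reduce Theorem~\ref{thm:QNN_yzzyz_function_Z_measurement} to Lemma~\ref{lem:qnn_yzzyz} together with standard Fourier-analytic facts, following the same strategy used in the proof of Proposition~\ref{prop:QNN_yzy_function_ket0} but now exploiting the complex Fourier coefficients afforded by the $\WZW$ architecture. First I would recall that, if $\ket{\psi(x)} = U^{\WZW}_{\bm\theta,\bm\phi,L}(x)\ket{0}$ and $U^{\WZW}_{\bm\theta,\bm\phi,L}(x)$ has the block form in Eq.~\eqref{eq:poly form QNNwzw_}, then the first column of the matrix is $(P(x), Q^*(x))^T$, so a direct computation gives $\expval{Z}{\psi(x)} = |P(x)|^2 - |Q(x)|^2$. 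Using condition~3 of Lemma~\ref{lem:qnn_yzzyz}, $|P(x)|^2 + |Q(x)|^2 = 1$, this simplifies to $\expval{Z}{\psi(x)} = 2|P(x)|^2 - 1$. Hence it suffices to realize, for a suitable Laurent polynomial $P$, the approximation $2|P(x)|^2 - 1 \approx f(x)$, equivalently $|P(x)|^2 \approx \tfrac{1 + f(x)}{2}$.

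Next I would carry out the approximation in two layers, mirroring the proof of Proposition~\ref{prop:QNN_yzy_function_ket0}. Since $f$ is square-integrable, for any $\epsilon > 0$ there is a truncated Fourier series $f_K$ with $\norm{f_K - f} \le \epsilon/2$; crucially $f_K$ may now have genuinely complex coefficients since we are no longer restricted to even functions. Then the function $\sqrt{(1+f_K(x))/2}$ is itself square-integrable (it is real, nonnegative, bounded by $1$), so it admits a truncated Fourier series $g_L$ with complex coefficients satisfying $\norm{g_L - \sqrt{(1+f_K)/2}} \le \epsilon/8$, where $L$ is the truncation order and $g_L \in \CC[e^{ix/2}, e^{-ix/2}]$. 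A subtlety is that $g_L$ need not satisfy $|g_L(x)| \le 1$ and need not have the fixed parity $L \bmod 2$ demanded by Lemma~\ref{lem:qnn_yzzyz}; I would handle this exactly as is implicit in the cited constraint $\abs{P(x)}\le 1$ — one rescales $g_L$ by a factor slightly less than $1$ (absorbing the small error) to enforce the norm bound, and one may pad the circuit with extra trivial layers or restrict attention to a single parity class, so that taking $P = g_L$ (suitably normalized) meets conditions~1 and~2. Given such a $P$ with $\abs{P(x)}\le 1$, the remark after Lemma~\ref{lem:qnn_yzzyz} (citing \cite{silva2022fourierbased, wang2022quantuma}) guarantees a complementary $Q$ with $\abs{P}^2+\abs{Q}^2=1$, and Lemma~\ref{lem:qnn_yzzyz} then produces the QNN $U^{\WZW}_{\bm\theta,\bm\phi,L}(x)$ realizing this $P$.

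Finally I would chain the error estimates. Writing $\expval{Z}{\psi(x)} = 2\abs{g_L(x)}^2 - 1$ (after the normalization) and using $\abs{\abs{g_L}^2 - (1+f_K)/2} = \abs{\abs{g_L} - \sqrt{(1+f_K)/2}}\cdot\abs{\abs{g_L} + \sqrt{(1+f_K)/2}} \le 2\abs{\abs{g_L} - \sqrt{(1+f_K)/2}}$, together with the reverse triangle inequality $\abs{\abs{g_L} - \sqrt{(1+f_K)/2}} \le \abs{g_L - \sqrt{(1+f_K)/2}}$, I get $\norm{\expval{Z}{\psi(x)} - f_K(x)} \le 4\norm{g_L - \sqrt{(1+f_K)/2}} \le \epsilon/2$. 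Combining with $\norm{f_K - f}\le\epsilon/2$ via the triangle inequality yields Eq.~\eqref{eq:yzzyz_function_}.

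I expect the main obstacle to be the bookkeeping around the structural constraints on $P$ in Lemma~\ref{lem:qnn_yzzyz}: one must argue that the candidate $P = g_L$ (or a slight modification) can simultaneously be made to satisfy the degree bound, the fixed parity $L \bmod 2$, and the strict norm bound $\abs{P(x)}\le 1$ for all $x\in\RR$, without spoiling the $\epsilon/8$ approximation. The parity issue is the delicate one — a generic truncated Fourier series mixes both parities — but it can be absorbed because a Laurent polynomial in $e^{ix/2}$ of mixed parity and degree $N$ is just a Laurent polynomial of a fixed parity after reindexing the exponent by powers of $e^{ix/2}$ rather than $e^{ix}$; choosing $L$ to match the degree in the variable $e^{ix/2}$ resolves it. The normalization by $(1-\delta)$ for small $\delta$ handles the norm bound at the cost of an error that can be folded into the $\epsilon/8$ budget, so no essential difficulty remains.
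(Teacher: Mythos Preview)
Your proposal is correct and follows essentially the same approach as the paper: approximate $f$ by a truncated Fourier series $f_K$, then approximate $\sqrt{(1+f_K)/2}$ by a trigonometric polynomial $g_L$, invoke Lemma~\ref{lem:qnn_yzzyz} (with the complementary-$Q$ existence result) to realize $P=g_L$, and chain the error estimates exactly as you do. If anything, you are more explicit than the paper about the structural side-conditions (parity in $e^{ix/2}$, the $\abs{P}\le 1$ normalization), which the paper's proof leaves implicit.
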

\begin{proof}
Similar as in the proof of Proposition~\ref{prop:QNN_yzy_function_ket0}, we use the fact that square-integrable functions $f$ can be approximated by a truncated Fourier series. Then, there exists a $K\in \NN^+$ such that
\begin{align}\label{neq:truncated_from}
    \norm{f_K(x)-f(x)} \leq \frac{\epsilon}{2},
\end{align}
where $f_K(x)$ is the $K$-term truncated Fourier series of $f(x)$. Consider the function $\sqrt{\frac{1+f_K(x)}{2}}$, using the property of square-integrable functions again, there exists an $L\in\NN^+$ such that
\begin{align}
    \norm{g_L(x)-\sqrt{\frac{1+f_K(x)}{2}}} \leq \frac{\epsilon}{8},
\end{align}
where
\begin{align}
    g_L(x)= \sum_{n=-L/2}^{L/2}c_ne^{inx}
\end{align}
is the partial Fourier series of function $\sqrt{\frac{1+f_K(x)}{2}}$ and $L \pmod 2 \equiv 0$. Next we need to prove
\begin{align}\label{eq:yzy_even_function_ket0}
    \expval{U_{\bm{\theta},\bm{\phi},L}^{\WZW}(x)}{0}=g_L(x).
\end{align}
Specifically, $g_L(x)$ can be written as follows
\begin{align}
    g_L(x)=\sum_{n=-L/2}^{L/2}c_n(e^{ix/2})^{2n}.
\end{align}
Obviously, $g_L \in \CC[e^{ix/2}, e^{-ix/2}]$ and conditions of Lemma~\ref{lem:qnn_yzzyz} are satisfied, thus Eq.~\eqref{eq:yzy_even_function_ket0} holds. Further, we have
\begin{align}
    \norm{\expval{Z}{\psi(x)}-f_K(x)} \leq \frac{\epsilon}{2}.
\end{align}
Finally, combined with inequality~\eqref{neq:truncated_from}, the inequality~\eqref{eq:yzzyz_function_} holds.
\end{proof}

\section{Limitations on representing multivariate Fourier series}\label{appendix:fourier_frequency_analysis}
In this section, we show that if the single-qubit native QNN is written in the form of a $K$-truncated multivariate Fourier series, it has a rich Fourier frequency set $\Omega$, but it cannot meet the requirements of the corresponding Fourier coefficients set $C_{\Omega}$ due to the curse of dimensionality. Specifically, let $\bm x \coloneqq (x^{(1)},x^{(2)},\ldots,x^{(d)}) \in\RR^d$ and the single-qubit native QNN is defined as follows: 
\begin{align}\label{eq:qnnRzU3Rz}
    U_{\bm\theta,L}(\bm{x}) = U_3(\theta_0, \phi_0, \lambda_0) \prod_{j = 1}^L R_Z(x_j)  U_3(\theta_j, \phi_j, \lambda_j),
\end{align}
where $x_j$ is a one-dimensional data of $\bm x$, i.e., $x_j\in\{x^{(m)} \mid m=1,\cdots,d\}$. The quantum circuit is shown in Fig.~\ref{fig:qnn_circuit_u3zu3}.
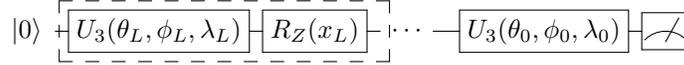
\begin{figure}[H]
    \centerline{
    \Qcircuit @C=0.5em @R=0.5em{
        \lstick{\ket{0}}&\gate{U_3(\theta_L,\phi_L,\lambda_L)}&\gate{R_Z(x_L)}&\qw&&\cdots&&&\qw&\gate{U_3(\theta_0,\phi_0,\lambda_0)}&\meter
        \gategroup{1}{2}{1}{4}{.7em}{--}
      }
    }
    \caption{Circuit of $U_{\bm\theta,L}(\bm{x})$, where the trainable block is composed of $U_3(\cdot)$, and the encoding block is $R_Z(\cdot)$. Note that there is no restriction on the encoding order of $x^{(m)}$.}
    \label{fig:qnn_circuit_u3zu3}
\end{figure}
Without loss of generality, assume that each one-dimensional data $x^{(m)}$ is uploaded the same number of times, denoted by $K$, then we have $Kd=L$. Further, we write $V_j \coloneqq U_3(\theta_j, \phi_j, \lambda_j)$ for short, that is
\begin{align}
    U_{\bm\theta,L}(\bm x)=V_0\prod_{j=1}^{L}\begin{bmatrix}
        e^{-ix_j\lambda_0} & 0\\
        0 & e^{-ix_j\lambda_1}
    \end{bmatrix}V_j,
\end{align}
where $\lambda_0=\frac{1}{2}$ and $\lambda_1=-\frac{1}{2}$.
More generally, we have
\begin{align}
    U_{\bm\theta,L}(\bm x) = \sum_{j_1,\cdots,j_L=0}^1 e^{-i(\lambda_{j_1}x_1+\cdots+\lambda_{j_L}x_L)}\, V_0\ketbra{j_1}V_{1}\cdots\ketbra{j_L}V_L.
\end{align}
Next, we reformulate the above equation using $x^{(m)}$ instead of $x_j$. We denote $I_m\subset\{1,\cdots,L\}$ as the index set of encoding block $R_Z(x^{(m)})$ and $I^{\prime}_m=\{\lambda_{j_k} \mid k \in I_m\}$, then
\begin{align}
    U_{\bm\theta,L}(\bm x) = \sum_{\bm{j}\in\{0,1\}^L} e^{-i(\Lambda_{\bm{j}}^{(1)}x^{(1)}+\cdots+\Lambda_{\bm{j}}^{(d)}x^{(d)})}\,V_0\ketbra{j_1}V_{1}\cdots\ketbra{j_L}V_L,
\end{align}
where $\bm j$ is bit-strings composed of $j_l$, $l=1,\cdots,L$ and $\Lambda_{\bm j}^{(m)}=\sum \lambda$, $\lambda \in I^\prime_m$. Further, for the initial state $\ket{0}$ and some observable $M$, we have
\begin{align}\label{eq:qnn_fourier_series_form}
    f_{\bm \theta,L}(\bm x)=\bra{0}U_{\bm{\theta},L}^{\dagger}(\bm{x})MU_{\bm{\theta},L}(\bm{x})\ket{0}=\sum_{\bm\omega\in\Omega}c_{\bm\omega}e^{i\bm\omega \cdot \bm x},
\end{align}
where $\Omega=\{h_{\bm{j,k}}\coloneqq(\Lambda_{\bm k}^{(1)}-\Lambda_{\bm j}^{(1)},\cdots,\Lambda_{\bm k}^{(d)}-\Lambda_{\bm j}^{(d)}) \mid \bm{j,k}\in\{0,1\}^L\}$. We consider the $m$-th dimension,
\begin{align}
    \Lambda_{\bm k}^{(m)}-\Lambda_{\bm j}^{(m)} = \{(\lambda_{k_1}+\cdots+\lambda_{k_K})-(\lambda_{j_1}+\cdots+\lambda_{j_K}) \mid \lambda_{k_1}\cdots\lambda_{k_K},\lambda_{j_1}\cdots\lambda_{j_K}\in I^\prime_m\}.
\end{align}
Since $\lambda=\pm\frac{1}{2}$, for all $\lambda\in I_{m}^{\prime}$, we can derive
\begin{align}
    \{\Lambda_{\bm k}^{(m)}-\Lambda_{\bm j}^{(m)} \mid \bm{j,k}\in\{0,1\}^L\}=\{-K,\cdots,0,\cdots,K\}.
\end{align}
Then the Fourier frequency spectrum is $\Omega=\{-K,\cdots,0,\cdots,K\}^d$. For the Fourier coefficient set $C_{\Omega}=\{c_{\bm{\omega}} \mid \bm{\omega}\in\Omega\}$, we have
\begin{align}
    c_{\bm\omega}=\sum_{\bm{j,k}\in\{0,1\}^L,
    \bm{\omega}=h_{\bm{j,k}}}C_{\bm{j,k}},
\end{align}
and
\begin{align}
    C_{\bm{j,k}}=\bra{0}V^{\dagger}_L\ketbra{k_L}V^{\dagger}_{L-1}\cdots\ketbra{k_1}V^{\dagger}_1MV_1\ketbra{j_1}\cdots V_{L-1}\ketbra{j_L}V_{L}\ket{0}.
\end{align}
If the number of truncation terms $K$ is fixed, as the dimension $d$ increases, the degrees of freedom of the set $C_{\Omega}$ increase with $O(d)$, so it cannot represent any exponential size Fourier coefficients set. Thus for all $\bm{\theta}\in\RR^{3(L+1)}$, the Eq.~\eqref{eq:qnn_fourier_series_form} cannot express an arbitrary $k$-truncated multivariate Fourier series.

\section{Extension to multi-qubit QNNs}\label{sec:extension_scheme}
We have already shown that single-qubit native QNNs are able to approximate any univariate function but possibly could not approximate arbitrary multivariate functions by Fourier series in Section~\ref{sec:expressivity} and Section~\ref{sec:limitation}. To address this limitation, research into the extension approach of single-qubit native QNNs is crucial. For classical NNs, a common approach toward overcoming such limitations is to increase the width or depth of the networks. We conjecture that QNNs have similar characteristics. We hereby provide a multi-qubit extension strategy as shown in Fig.~\ref{fig:parallelentanglement}, called \emph{Parallel-Entanglement}, introducing quantum entanglement by multi-qubit gates such as CNOT gates. Similar to classical NNs in which different neurons are connected by trainable parameters to form deep NNs, QNNs can establish the connection between different qubits through quantum entanglement. 

 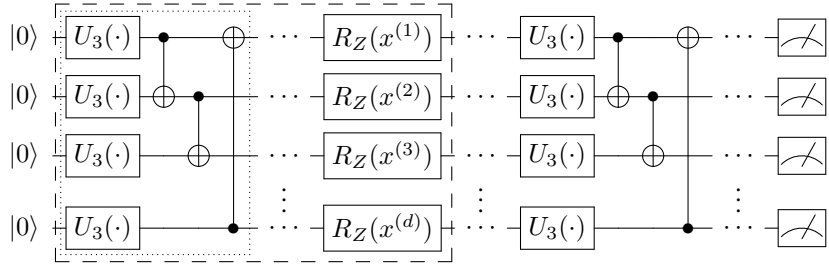
\begin{figure}[htbp]
    \centerline{
    \Qcircuit @C=0.5em @R=0.5em{
        \lstick{\ket{0}}&\gate{U_3(\cdot)}&\ctrl{1}&\qw&\targ&\qw&&\cdots&&&\gate{R_Z(x^{(1)})}&\qw&&\cdots&&&\gate{U_3(\cdot)}&\ctrl{1}&\qw&\targ&\qw&&\cdots&&&\meter\\ 
        \lstick{\ket{0}}&\gate{U_3(\cdot)}&\targ&\ctrl{1}&\qw&\qw&&\cdots&&&\gate{R_Z(x^{(2)})}&\qw&&\cdots&&&\gate{U_3(\cdot)}&\targ&\ctrl{1}&\qw&\qw&&\cdots&&&\meter\\ 
      \lstick{\ket{0}}&\gate{U_3(\cdot)}&\qw&\targ&\qw&\qw&&\cdots&&&\gate{R_Z(x^{(3)})}&\qw&&\cdots&&&\gate{U_3(\cdot)}&\qw&\targ&\qw&\qw&&\cdots&&&\meter\\
      \lstick{}&&&&&&&\vdots&&&&&&\vdots&&&&&&&&&\vdots\\
      \lstick{\ket{0}}&\gate{U_3(\cdot)}&\qw&\qw&\ctrl{-4}&\qw&&\cdots&&&\gate{R_Z(x^{(d)})}&\qw&&\cdots&&&\gate{U_3(\cdot)}&\qw&\qw&\ctrl{-4}&\qw&&\cdots&&&\meter
      \gategroup{1}{2}{5}{5}{.4em}{.}
      \gategroup{1}{2}{5}{11}{.8em}{--}
      }
    }
    \caption{The Parallel-Entanglement model of $d$ qubits and $L$ layers. Each layer consists of a trainable block and an encoding block in the dashed box. Each trainable block is composed of repeated sub-blocks of $U_3$ rotation gates and CNOT gates, as shown in the dotted box. The number of sub-blocks in each trainable block is denoted by $D_{tr}$. The encoding block is a $d$-tensor of $R_Z(x^{(m)})$ gates for a $d$-dimensional data $\bm x \coloneqq (x^{(1)}, \ldots, x^{(d)})$.}
    \label{fig:parallelentanglement}
\end{figure}


We numerically show that the multi-qubit extension as shown in Fig.~\ref{fig:parallelentanglement} could improve the expressivity of single-qubit QNNs. Consider the same bivariate function $f(x,y) = (x^2+y-1.5\pi)^2 + (x+y^2-\pi)^2$ used in Section~\ref{subsec:multivariate_function_approximation}, we use a two-qubit QNN of $L=10$ and $D_{tr} = 3$ to approximate the target function $f(x,y)$ with the same training setting. The experimental results are shown in Fig.~\ref{fig:multivariate_function_2_10_3}. Compared with the approximation results of single-qubit QNNs in Fig.~\ref{fig:multivariate_function}, we can see that the two-qubit QNN has stronger expressive power than single-qubit models. Moreover, we could build the \emph{universal trainable block} (UTB) using a universal two-qubit quantum gate~\cite{zhang2003exact, vidal2004universal} consisting of $U_3$ and CNOT gates. Specifically, a two-qubit universal trainable block can express any two-qubit unitary matrix. Using a two-qubit QNN with UTBs yields a better approximation result which is shown in Fig.~\ref{fig:multivariate_function_2_UTB}. From the numerical results, we could see that the multi-qubit extension could potentially overcome the limitations of single-qubit QNNs on approximating multivariate functions as illustrated in Section~\ref{sec:limitation}.

\begin{figure}[htbp]
\centering
\subfloat[Target function.
]{\label{fig:Himmelblau_app}{\includegraphics[width=0.32\textwidth]{./paper_images/Himmelblau_function.png}}}\hfill
\subfloat[Approximation result.]{\label{fig:Himmelblau_approx_2_10_3}{\includegraphics[width=0.32\textwidth]{./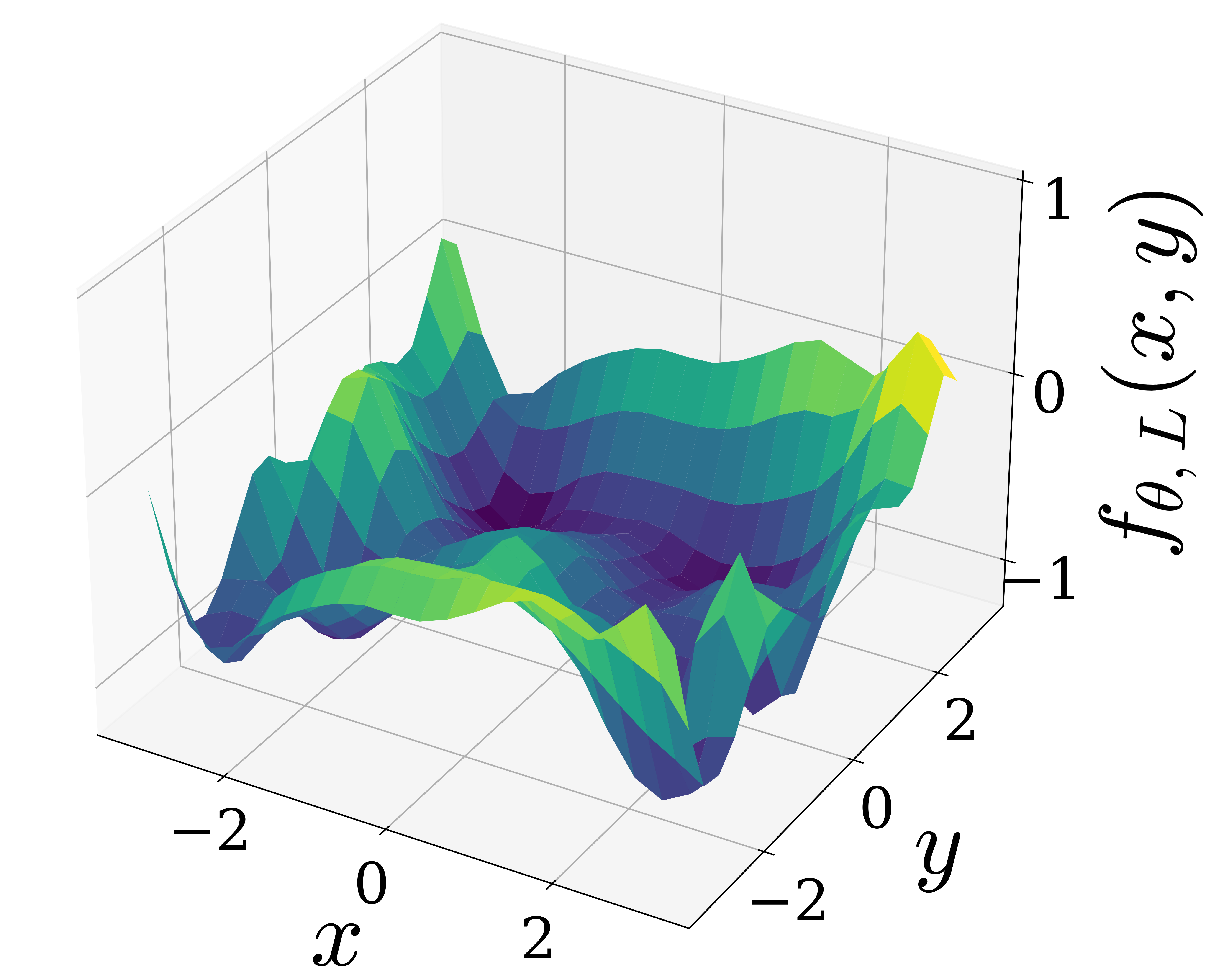}}}\hfill
\subfloat[Training loss.]{\label{fig:Himmelblau_loss_2_10_3}{\includegraphics[width=0.33\textwidth]{./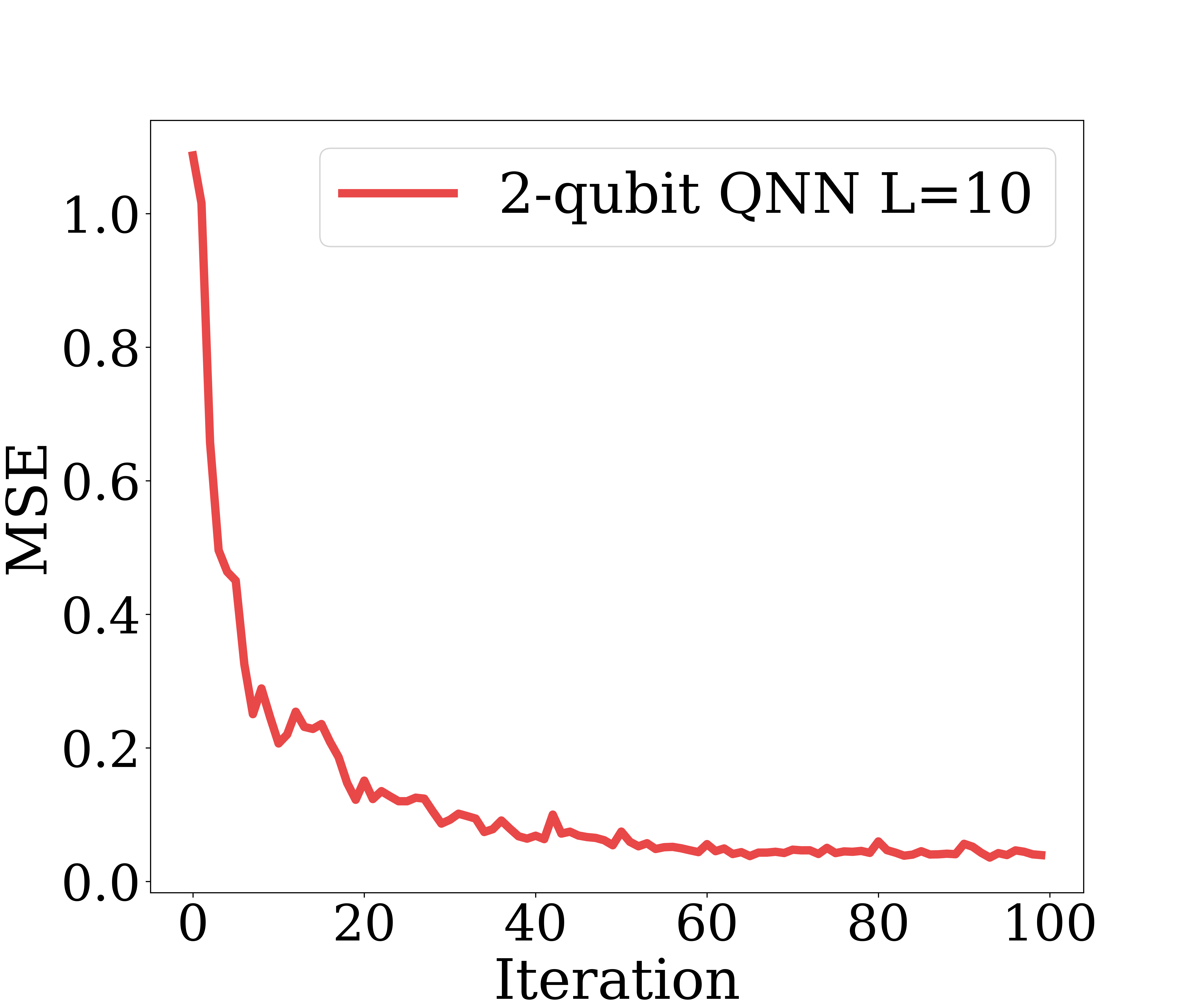}}}
\caption{Panel (a) is the plot of target function $f(x,y)$. Panel (b) shows the approximation result of a two-qubit QNN of $L=10$ and $D_{tr}=3$.  Panel (c) is the plot of training loss during the optimization.}
\label{fig:multivariate_function_2_10_3}
\end{figure}

\begin{figure}[htbp]
\centering
\subfloat[Approximation result.]{\label{fig:Himmelblau_approx_2_UTB}{\includegraphics[width=0.39\textwidth]{./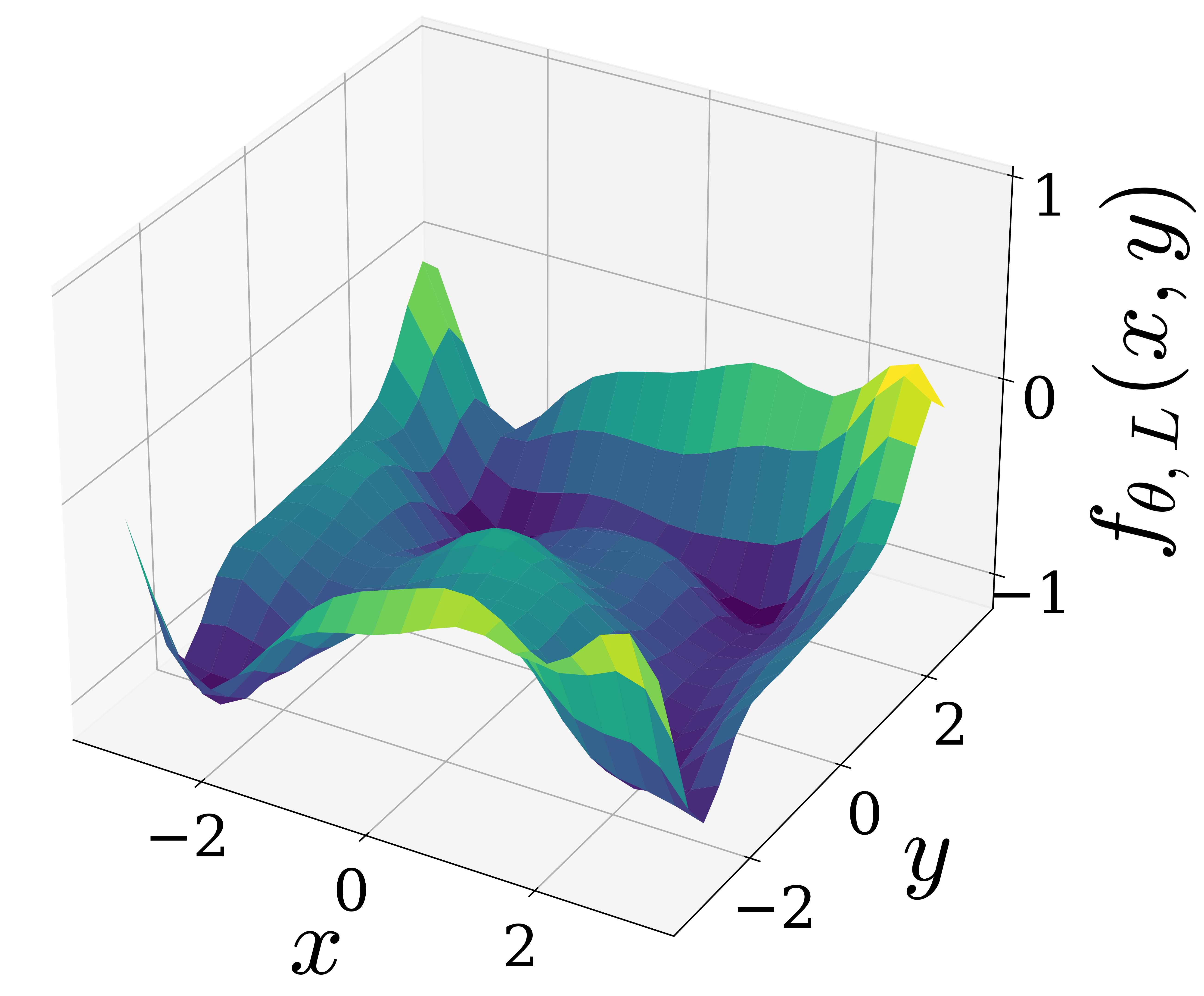}}}\hspace{4em}
\subfloat[Training loss.]{\label{fig:Himmelblau_loss_2_UTB}{\includegraphics[width=0.41\textwidth]{./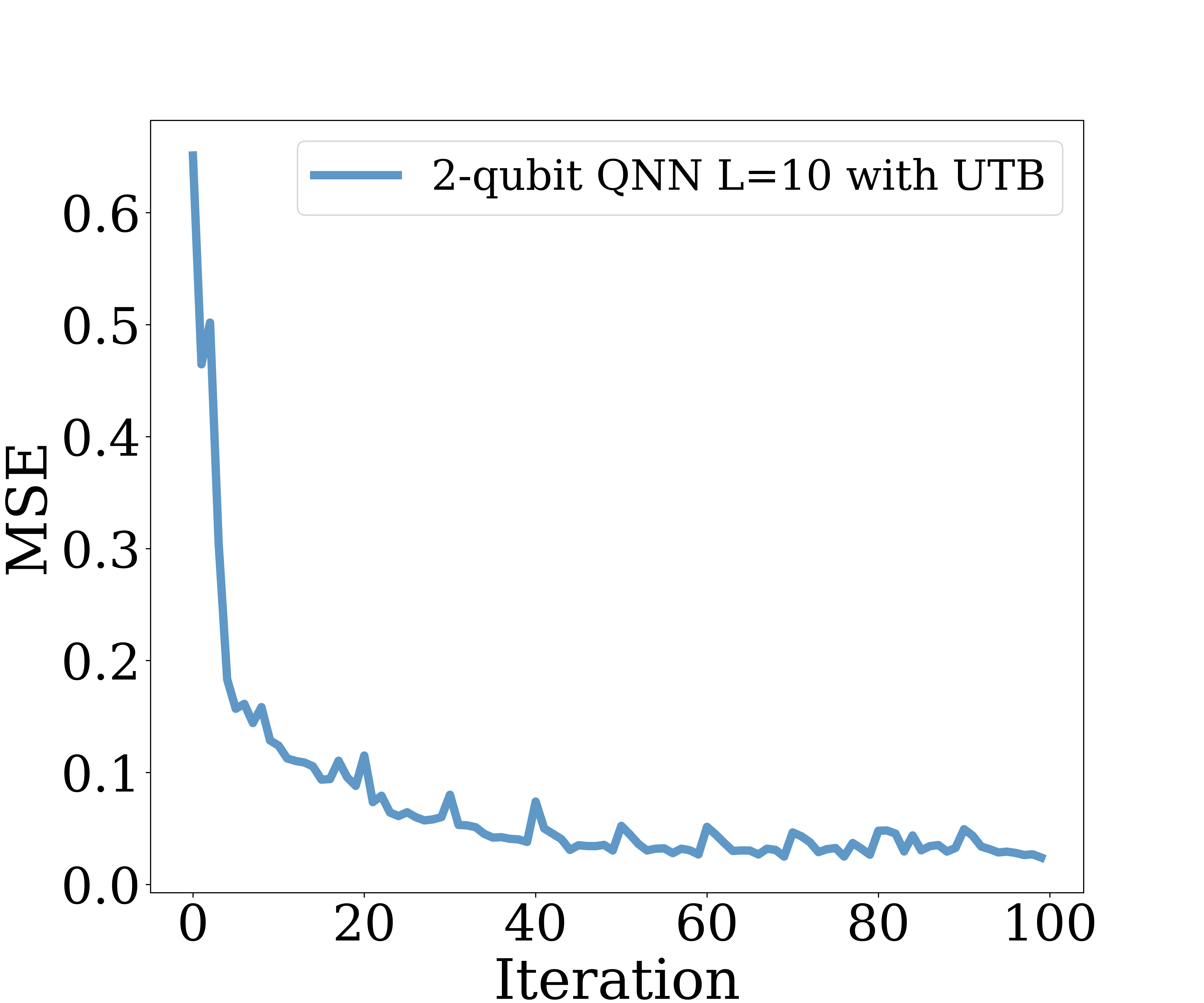}}}
\caption{Panel (a) shows the approximation result of a two-qubit QNN of $L=10$ with UTB. Panel (b) is the plot of training loss during the optimization.}
\label{fig:multivariate_function_2_UTB}
\end{figure}

We further illustrate the ability of Parallel-Entanglement models to address practical problems through experiments on the classification task. The public benchmark data sets are used to demonstrate the capabilities of Parallel-Entanglement models to tackle classification tasks.

\begin{table}[htbp]
    \centering
    \caption{The performance of the Parallel-Entanglement model.}
    \label{tab:comparison_label}
    \begin{tabular}{cccccc}
        \toprule
        \textbf{Dataset}&\textbf{\# of qubits}&\textbf{$L$}&\textbf{$D_{tr}$}&\textbf{\# of parameters}&\textbf{Average accuracy}\\
        \midrule
         Iris & 4 & 1 & 1 & 16 & $0.990\pm 0.01$\\
        \midrule
         HTRU2 & 8 & 1 & 1 & 32 & $0.910\pm 0.09$\\
          & & 3 & 1 & 64 & $0.970\pm 0.03$\\
          & & 3 & 2 & 128 & $0.980\pm 0.02$\\
        \midrule
         Breast Cancer & 4 & 1 & 1 & 16 & $0.780\pm 0.06$\\
          & & 3 & 1 & 32 & $0.840\pm 0.02$\\
          & & 3 & 2 & 64 & $0.845\pm 0.04$\\
        \bottomrule
    \end{tabular}
\end{table}

The performance of the Parallel-Entanglement model on classifying Iris~\cite{fisher1936use}, Breast Cancer~\cite{michalski1986multi}, and HTRU2~\cite{lyon2016fifty} data sets are summarized in Table~\ref{tab:comparison_label}. Specifically, 100 pieces of data are sampled from the dataset, with $80\%$ of them serving as the training set and $20\%$ serving as the test set. We use the Adam optimizer with a learning rate of 0.1 and a batch size of 40 to train the multi-qubit QNNs. In order to reduce the effect of randomness, the results of classification accuracy are averaged over 10 independent training instances.

The Iris data set contains 3 different classes, each class only has four attributes. Obviously, the 4-qubit model easily obtains an average accuracy of over $99\%$ with only 1 layer on Iris data. The HTRU2 data set only has two categories, and each example has 8 attributes. As a result, an 8-qubit QNN is required to complete this task. The average accuracy for binary classification with 1 layer achieves above $91\%$. We can see that adding the number of layers to 3 and the depth of each layer to 2 increases the average accuracy to $98\%$. Since the Breast Cancer data contains 30 features for binary classification, the principal component analysis (PCA) is used to reduce feature dimension. Here the numerical results of a 4-qubit model are given to illustrate the power of the QNN. Compared to the model using complete information, the QNN model does not perform perfectly. But increasing the number of layers and the depth may improve the test accuracy. Based on the finding of the preceding experiments, it is clear that the Parallel-Entanglement model is capable of handling practical classification problems.


\end{document}